\newcommand{\eqdef}{\stackrel{\rm{def}}{=}}
\newcommand{\emptyslot}{\sqcup}
\newcommand{\cc}[1][n]{\{0,1\}^{#1}}
\newcommand{\markedcube}[1][n]{\{0,1,\hat{0}, \hat{1}\}^{#1}}
\newcommand{\dist}{{\sf dist}}
\newcommand{\avgstretch}{{\sf avgStretch}}
\newcommand{\poly}{{\rm poly}}
\newcommand{\N}{{\mathbb N}}
\newcommand{\E}{{\mathbb E}}
\newcommand{\seq}{\subseteq}
\newcommand{\NC}[1]{\mathbf{NC^{#1}}}
\newcommand{\und}{\underline}
\newcommand{\markhat}{{\sf mark}}
\renewcommand{\int}{{\sf int}}
\newcommand{\sig}{{\sf signature}}
\newcommand{\Dict}{{\sf Dictator}}
\newcommand{\Maj}{{\sf Majority}}
\newcommand{\Parity}{{\sf XOR}}
\renewcommand{\Pr}{\mathop{\bf Pr\/}}
\newtheorem{theorem}{Theorem}
\newtheorem{corollary}{Corollary}[section]
\newtheorem{proposition}[corollary]{Proposition}
\newtheorem{lemma}[corollary]{Lemma}
\newtheorem{claim}[corollary]{Claim}
\newtheorem{remark}[corollary]{Remark}
\newtheorem{question}[corollary]{Question}
\newtheorem{problem}[corollary]{Problem}
\begin{document}

\title{On Lipschitz Bijections between Boolean Functions}

\author{
Shravas Rao
	\thanks{
    Courant Institute of Mathematical Sciences,
    New York University.
    Email: {\tt rao@cims.nyu.edu}
    }
\and
Igor Shinkar
	\thanks{
    Courant Institute of Mathematical Sciences,
    New York University.
    Email: {\tt ishinkar@cims.nyu.edu}
    }
}

\maketitle

\begin{abstract}
Given two functions $f,g \colon \cc \to \{0,1\}$
a mapping $\psi \colon \cc \to \cc$ is
said to be a \emph{mapping from $f$ to $g$} if
it is a bijection and $f(z) = g(\psi(z))$ for every $z \in \cc$.
In this paper we study Lipschitz mappings between boolean functions.

Our first result gives a construction of a $C$-Lipschitz mapping from the $\Maj$
function to the $\Dict$ function for some universal constant $C$.
On the other hand, there is no $n/2$-Lipschitz mapping in the other direction,
namely from the $\Dict$ function to the $\Maj$ function.
This answers an open problem posed by Daniel Varga in the paper of
Benjamini~et~al. (FOCS 2014).

We also show a mapping $\phi$ from $\Dict$ to $\Parity$ that is
3-local, 2-Lipschitz, and its inverse is $O(\log(n))$-Lipschitz,
where by $L$-local mapping we mean that each output bit of the mapping
depends on at most $L$ input bits.

Next, we consider the problem of finding functions such that
any mapping between them must have large \emph{average stretch},
where the average stretch of a mapping $\phi$ is defined as
$\avgstretch(\phi) = \E_{x,i}[\dist(\phi(x),\phi(x+e_i)]$.
We show that any mapping $\phi$ from $\Parity$ to $\Maj$ must satisfy
$\avgstretch(\phi) \geq c\sqrt{n}$ for some absolute constant $c>0$.
In some sense, this gives a ``function analogue'' to the question of
Benjamini~et~al. (FOCS 2014), who asked whether there exists a set
$A \seq \cc$ of density 0.5 such that any bijection from
$\cc[n-1]$ to $A$ has large average stretch.

Finally, we show that for a random balanced function $f\colon \cc \to \cc$
with high probability there is a mapping $\phi$ from $\Dict$
to $f$ such that both $\phi$ and $\phi^{-1}$ have constant average stretch.
In particular, this implies that one cannot obtain lower bounds on average
stretch by taking uniformly random functions.

\end{abstract}

\thispagestyle{empty}
\pagebreak


\section{Introduction}\label{sec:intro}

Given two functions $f,g \colon \cc \to \{0,1\}$
a mapping $\psi \colon \cc \to \cc$ is said to be a \emph{mapping from $f$ to $g$}
if $\psi$ is a bijection and for every $z \in \cc$ it holds that $f(z) = g(\psi(z))$.
Of course, if $\E[f] = \E[g]$, then there are many mappings from $f$ to $g$,
and we can further ask whether there are ``simple'' mappings from $f$ to $g$,
where ``simple'' can mean, for example, that $\psi$ is
computable by a small circuit, or has some other nice structure.
In this paper we ask about the existence of Lipschitz mappings between
some well studied boolean functions, including the functions
$\Dict$, $\Maj$, $\Parity$, and a uniformly random balanced function.

As a first example suppose that $f$ is obtained from $g$ by renaming the coordinates.
Then, trivially, there is a $1$-Lipschitz mapping from $f$ to $g$,
which simply permutes the coordinates, and in particular, each output bit of the mapping
depends on exactly one input of its input bits.

In some sense existence of a Lipschitz mapping between $f$ and $g$
implies some similarity between them because
such a mapping induces
\begin{itemize}
\item a Lipschitz bijection from $f^{-1}(0)$ to $g^{-1}(0)$,
\item a Lipschitz bijection from $f^{-1}(1)$ to $g^{-1}(1)$, and
\item a Lipschitz mapping from the cut in $\cc$ defined by $f$
to the cut defined by $g$.
\end{itemize}

Below we summarize the results shown in this paper.

\subsubsection*{Bijections between $\Dict$ and $\Maj$}

It is a recurring theme in the analysis of boolean function that the $\Dict$ function
and the $\Maj$ function are in some senses, opposites of one another.
For example, the Majority is Stablest theorem~\cite{MOO10} states
that if the noise stability of a function significantly deviates from the
noise stability of $\Maj$, the function must have an influential coordinate,
and hence is non-trivially correlated with the corresponding $\Dict$ function.
Another example  is the theorem of Bourgain~\cite{Bourgain}
(see also a recent improvement by Kindler and O'Donnell~\cite{KKO})
saying that if the Fourier transform of a function deviates in an appropriate
sense from that of $\Maj$, then the function can be approximated
by a {\em junta}, i.e., essentially depends on a small number of coordinates,
which also implies some correlation with a $\Dict$ function.

Motivated by questions related to lower bounds on sampling by low-level
complexity classes~\cite{Vi12}, Lovett and Viola~\cite{LV11} suggested
to further explore the differences between the two function,
and asked whether it is true that any bijective mapping
$\phi \colon \Dict^{-1}(1) \to \Maj^{-1}(1)$ must have a large average stretch,
where by average stretch we refer to the quantity
\[
	\avgstretch(\phi) = \E_{x \sim y \in \Dict^{-1}(1)}[\dist(\phi(x),\phi(y))],
\]
with $x \sim y \in \Dict^{-1}(1)$ denoting a random edge in $\{0,1\}^n$ such that $x_1 = y_1 = 1$.

The question has been answer negatively in~\cite{BCS-bilip} in a stronger sense,
where there was shown an explicit bi-Lipschitz bijection that maps $\{0,1\}^{n-1}$
to the upper half of $\cc$ (or equivalently a mapping from $\Dict^{-1}(1)$ to $\Maj^{-1}(1)$).
In the same paper the following problem has been raised.

\begin{problem}\label{problem:daniel varga}
Let $n$ be odd. Is there a bi-Lipschitz bijection
$f \colon \cc \to \cc$ that maps the half cube to the Hamming ball?
In other words, is there a bi-Lipschitz bijection
between $\Maj$ to $\Dict$?
\end{problem}

It is easy to see that there is no $C$-Lipschitz bijection
from $\Dict$ to $\Maj$ for $C < n/2$. Indeed for any bijection $\phi$
from $\Dict(x) = x_1$ to $\Maj$ consider $x \in \{0,1\}^n$ such that $\phi(x) = (1,1,\dots,1)$,
and let $y = x - e_1$. Then, the weight of $\phi(y)$ must be at most $n/2$
since $\Maj(\phi(y)) = \Dict(y) = 0$, and thus $\dist(\psi(x),\psi(y)) \geq n/2$.

In the other direction the answer is not as obvious, and we resolve this question in this paper.
Specifically, we prove the following theorem in Section~\ref{sec:maj-to-dict}.

\begin{theorem}\label{thm:maj-dict}
For all odd integers $n \in \N$ there exists a $C$-Lipschitz bijection
$\psi \colon \cc \to \cc$ from $\Maj$ to $\Dict$, where $C \in \N$ is some absolute constant.
\end{theorem}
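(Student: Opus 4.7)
Plan: I would construct $\psi$ by splitting the cube along $\Maj$ and defining $\psi$ on each half via a bi-Lipschitz bijection onto a $\Dict$-half, chosen so that the two halves glue together without blowing up the Lipschitz constant across the Majority cut. Let $\rho \colon \cc[n-1] \to \Maj^{-1}(1)$ be the bi-Lipschitz bijection from~\cite{BCS-bilip} with universal Lipschitz constant $C_0$. Define $\psi$ on the upper half by $\psi(z) = (1, \rho^{-1}(z))$ for $z \in \Maj^{-1}(1)$. For the lower half I would construct a second bi-Lipschitz bijection $\rho' \colon \cc[n-1] \to \Maj^{-1}(0)$ and set $\psi(z) = (0, {\rho'}^{-1}(z))$ for $z \in \Maj^{-1}(0)$. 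By construction, $\psi$ is a bijection with $\Dict(\psi(z)) = \Maj(z)$, and edges whose endpoints lie in a common $\Maj$-half are handled automatically by the bi-Lipschitz bounds on $\rho$ and $\rho'$.

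The delicate case is cross-edges $\{x,y\}$ with $x \in \Maj^{-1}(1)$ and $y \in \Maj^{-1}(0)$; any such edge has $|x|=(n+1)/2$, $|y|=(n-1)/2$, and $y = x \oplus e_i$ for some $i$ with $x_i=1$. Here
\[
\dist(\psi(x),\psi(y)) \;=\; 1 + \dist\bigl(\rho^{-1}(x),{\rho'}^{-1}(y)\bigr),
\]
so $\rho$ and $\rho'$ must be aligned at the middle layers. The naive choice $\rho'(v) = \overline{\rho(v)}$ fails, since $\dist(x,\overline{y}) = n-1$ gives $\Theta(n)$ stretch. Instead, I would fix a perfect matching $\mu$ of Hamming distance one between the middle-upper layer $\{z \in \cc : |z|=(n+1)/2\}$ and the middle-lower layer $\{z \in \cc : |z|=(n-1)/2\}$; such $\mu$ exists by Hall's theorem because the bipartite subgraph of $\cc$ induced by the two middle layers is $(n+1)/2$-regular on both sides. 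I would then require $\rho'(v) = \mu(\rho(v))$ for every $v$ whose image $\rho(v)$ lies in the middle-upper layer. For a cross-edge with $y=\mu(x)$, the second coordinates of $\psi(x)$ and $\psi(y)$ coincide and the stretch is $1$; for any other cross-edge, ${\rho'}^{-1}(y) = \rho^{-1}(\mu^{-1}(y))$ and $\dist(x,\mu^{-1}(y)) \le 2$, yielding $\dist(\psi(x),\psi(y)) \le 1 + 2C_0$. Off the middle layers, $\rho'$ is defined by recursion: once the boundary values are dictated by $\mu$, the remaining problem is a bi-Lipschitz bijection between a specific subset of $\cc[n-1]$ and $\{w \in \cc : |w| < (n-1)/2\}$, which is a sub-problem of the same type that can be handled by a smaller instance of the BCS bijection used as a black box.

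The hard part will be to show that $\rho'$ constructed this way is bi-Lipschitz with an absolute constant despite having its values on the middle-upper layer dictated by $\mu$. A priori, imposing the matching alignment could conflict with the recursive BCS construction and amplify the Lipschitz constant. To avoid this, I would choose the BCS bijection $\rho$ so that $\rho^{-1}$ restricted to the middle-upper layer lands in a structured subset of $\cc[n-1]$ compatible with the recursive construction for the lower ball; proving that such a choice of $\rho$ is possible is the main technical lemma underlying the theorem. Once $\rho$ and $\rho'$ are both bi-Lipschitz with an absolute constant and agree along $\mu$ at the middle layers, every cube-edge, including cross-edges, has $\psi$-stretch at most $1 + 2C_0 = O(1)$, establishing Theorem~\ref{thm:maj-dict} with $C$ an absolute constant.
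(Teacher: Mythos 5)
Your plan correctly isolates the difficulty (the cross-edges at the middle layers), and your local analysis of a cross-edge under an alignment constraint $\rho'(v)=\mu(\rho(v))$ is sound: if that constraint holds on the preimage of the middle-upper layer, then $\dist(\psi(x),\psi(y))\leq 1+2C_0$ for every middle edge. However, the proposal has a genuine gap where it matters most. You explicitly defer the existence of a bi-Lipschitz $\rho'$ compatible with the matching constraint, calling it ``the main technical lemma,'' but this lemma is the theorem: nothing you've written rules out that any $\rho'$ agreeing with $\mu\circ\rho$ on the middle-upper preimage must blow up its Lipschitz constant, and the recursive ``smaller BCS instance'' you appeal to is not actually a problem of the same type (BCS maps a full $(n-1)$-cube onto a Hamming ball, not a prescribed subset of $\cc[n-1]$ onto a punctured lower ball with boundary values frozen). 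The Hall-matching $\mu$ does nothing to ensure this compatibility; an arbitrary matching between the middle layers can easily be inconsistent with the chain structure that makes BCS Lipschitz.

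The paper sidesteps exactly this stitching problem by not treating the BCS bijection as a black box. It works directly with the BTK symmetric-chain partition: for each chain $C=\{c_k,\dots,c_{n-1-k}\}$ of $\cc[n-1]$ it forms the block $P_C=\{c\circ b\}$ in $\cc$, and defines $\psi$ on the upper and lower halves of each block using \emph{the same chain} $C$ with symmetric index formulas ($1\circ c_{2j-(n-k)+x_n}$ above the middle, $0\circ c_{(n+k)-2j-1-x_n}$ below). Because both halves of the block target the same chain, the middle-layer crossing is automatically aligned; no auxiliary matching or compatibility lemma is needed. The Lipschitz bound then follows from a single structural fact (Lemma~\ref{lemma:BTK dist}: flipping one coordinate changes the BTK signature in at most $3$ positions), applied uniformly across the three cases including the middle crossing. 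In short, the paper builds the alignment into the decomposition rather than trying to enforce it after the fact, which is precisely the step your proposal leaves open.
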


As mentioned above, it has been shown in~\cite{BCS-bilip} that there exists a
bi-Lipschitz bijection that maps the upper half of $\cc$
to $\{0,1\}^{n-1}$. Therefore, there exists a bijection from $\cc$ to $\cc$
that maps the upper half of $\cc$ to $\{x \in \cc \colon x_1=1\}$,
and maps the lower half of $\cc$ to $\{x \in \cc \colon x_1=0\}$,
that is Lipschitz on the upper half of the hypercube, and on the lower half of the hypercube.
However, it was not clear how to ``stitch'' these two bijections so that
the endpoints of the edges in the middle layer will also be mapped close to each other.
Theorem~\ref{thm:maj-dict} says that this is indeed possible.

\subsubsection*{Bijections between $\Dict$ and $\Parity$}

We further study the notion of mappings between boolean function
by studying mappings between the $\Dict$ function and the $\Parity$ function.
For this question the well known mapping
$\phi(x_1,x_2,\dots,x_n) = (x_1+x_2,x_2+x_3,\dots,x_{n-1}+x_n,x_n)$
is clearly a bijection from $\Dict(x) = x_1$ to $\Parity$.
Note, however, it is not bi-Lipschitz, as flipping the $k$th bit
in the output changes it preimage in the first $k$ coordinates.
That is if $y=\phi(x)$, then $\phi^{-1}(y+e_k) = x + \sum_{i=1}^k e_i$.

In fact, it is not difficult to come up with a bi-Lipschitz mapping from $\Dict$ to $\Parity$.
Indeed, define $\psi(x_1,x_2,\dots,x_n) = (\Parity(x),x_2,\dots,x_n)$,
It is easy to check that $\psi$ is indeed a $2$-bi-Lipschitz mapping
from $\Dict(x) = x_1$ to $\Parity$.
It makes sense, however, to ask for more, namely, does
there exist a bi-Lipschitz mapping from $\Dict$ to $\Parity$
that is in $\NC0$, i.e., each of its output bits depends only
on a constant number of input bits.%
\footnote{Note that the inverse mapping, namely a bijection from $\Parity$ to
$\Dict$ cannot be local since the dictating output coordinate must be the parity of all
input bits, and thus depend on all of them.}

We prove the following theorem in Section~\ref{sec:dict-to-xor}.
\begin{theorem}\label{thm:dict-xor}
    There exists a Lipschitz mapping $\phi$ from $\Dict$ to $\Parity$
    such that each of its output bits depends on at most 3 input bits,
    $\phi$ is 2-Lipschitz, and its inverse $\phi^{-1}$ is $O(\log(n))$-Lipschitz.
    Furthermore, the mapping $\phi$ is a linear operator over $GF(2)$.
\end{theorem}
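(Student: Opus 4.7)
Since $\phi$ must be a linear map $\phi(x) = Ax$ over $GF(2)$, the task reduces to constructing an invertible $A \in GF(2)^{n \times n}$ with row weights at most $3$ (for $3$-locality), column weights at most $2$ (for $\phi$ being $2$-Lipschitz), satisfying $\mathbf{1}^\top A = e_1^\top$ (which encodes $\Dict(x) = \Parity(\phi(x))$), and with every column of $A^{-1}$ of weight $O(\log n)$ (for $\phi^{-1}$ being $O(\log n)$-Lipschitz). My plan is to build $A$ recursively for $n = 2^k$ a power of $2$, starting from the base case $\phi_2(x_1, x_2) = (x_1 + x_2, x_2)$, and extend to general $n$ by padding.

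For the recursive step, apply $\phi_m$ separately to the left half $(x_1, \ldots, x_m)$ and to the right half $(x_{m+1}, \ldots, x_{2m})$, producing $u = \phi_m(x_1, \ldots, x_m)$ with $\sum_i u_i = x_1$ and $v = \phi_m(x_{m+1}, \ldots, x_{2m})$ with $\sum_i v_i = x_{m+1}$. The concatenation $(u, v)$ sums to $x_1 + x_{m+1}$, so I correct it by replacing one output $u_{j^*}$ with $u_{j^*} + x_{m+1}$ for a carefully chosen index $j^* \in [m]$ satisfying $|u_{j^*}| \leq 2$. This modification preserves the sum identity $\sum_i y_i = x_1$, the row-weight bound of $3$ (the modified row grows from weight at most $2$ to at most $3$, others unchanged), and the column-weight bound of $2$ for $A_{2m}$ (the input $x_{m+1}$ appeared in only one output of the right-half recursion, by the easy inductive fact that the ``first coordinate'' of each $\phi_m$ has column weight $1$, and gains exactly one more occurrence via the modification).

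The crux is the recurrence for the maximum column weight $W(n) := \max_j |\mathrm{col}_j(A_n^{-1})|$ of the inverse. Flipping a single output $y_j$ in $\phi_{2m}$ propagates through $A_{2m}^{-1}$ as follows: if $j$ is a left-half output, the flip affects only left-half $x$'s, of size at most $W(m)$; if $j$ is a right-half output, it flips at most $W(m)$ right-half $x$'s via the right-half inverse, and if $x_{m+1}$ happens to be among those flipped, the induced flip of $u_{j^*}$ additionally triggers the left-half inverse at column $j^*$, flipping $|\mathrm{col}_{j^*}(A_m^{-1})|$ further left-half coordinates. Hence $W(2m) \leq W(m) + |\mathrm{col}_{j^*}(A_m^{-1})|$, and if $j^*$ can always be chosen with $|\mathrm{col}_{j^*}(A_m^{-1})| = O(1)$, this yields $W(n) = O(\log n)$.

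The main obstacle is simultaneously maintaining $|u_{j^*}| \leq 2$ (so the modification keeps row weight at most $3$) and $|\mathrm{col}_{j^*}(A_m^{-1})| = O(1)$ across all $\log_2 n$ recursion levels. I would resolve this via a secondary inductive argument identifying, in the tree-like structure implicit in $A_m^{-1}$, a family of ``near-root'' coordinates with constant inverse-column weight that always contains a candidate with row weight at most $2$ available for modification; each such target can accept only a bounded number of modifications before saturating at row weight $3$, but new candidates appear at each recursion level, and a scheduling argument guarantees availability throughout. For $n$ not a power of $2$, embed $\cc$ into $\cc[n']$ with $n' = 2^{\lceil \log_2 n \rceil}$ by padding with zeros, apply $\phi_{n'}$, and restrict to the original coordinates after verifying that the padded coordinates behave cleanly.
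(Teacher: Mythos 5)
Your reduction to constructing an invertible $A \in GF(2)^{n\times n}$ with row weights $\le 3$, column weights $\le 2$, $\mathbf{1}^\top A = e_1^\top$, and inverse column weights $O(\log n)$ is correct and matches the paper's setup. Your recurrence for the inverse column weight is also correct: with $A_{2m} = \bigl(\begin{smallmatrix} A_m & E_{j^*} \\ 0 & A_m \end{smallmatrix}\bigr)$ one has $A_{2m}^{-1} = \bigl(\begin{smallmatrix} A_m^{-1} & A_m^{-1}E_{j^*}A_m^{-1} \\ 0 & A_m^{-1} \end{smallmatrix}\bigr)$ over $GF(2)$, and since $E_{j^*}$ has a single nonzero entry at $(j^*,1)$, each nonzero column of the top-right block is exactly $\mathrm{col}_{j^*}(A_m^{-1})$, giving $W(2m) \le W(m) + |\mathrm{col}_{j^*}(A_m^{-1})|$ as you claim.

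The gap is exactly where you flag it: you do not establish that at every recursion level there exists an index $j^*$ with row weight at most $2$ in $A_m$ \emph{and} inverse column weight $O(1)$. This is the entire difficulty, not a detail. Unwinding the linear algebra, your recursion is building a rooted tree on $n$ vertices: $A_m = I + M_m$ where $M_m$ is the adjacency matrix of a directed (parent $\to$ child) tree, row weight of $i$ is $1 + \#\text{children}(i)$, column weight is $1 + \#\text{parents}(i)$, and $\mathrm{col}_{j}(A_m^{-1})$ is the set of ancestors of $j$ (including $j$), so its weight is $\mathrm{depth}(j) + 1$. The step $m \to 2m$ glues a copy of the current tree as a new subtree hanging off node $j^*$. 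Requiring $|\mathrm{col}_{j^*}(A_m^{-1})| = O(1)$ means $j^*$ must be at depth $O(1)$; requiring row weight $\le 2$ means $j^*$ must have at most one child. There are at most $2^{O(1)}$ nodes at constant depth, and each can receive only one extra child before saturating, so after constantly many levels the shallow nodes are all full and you are forced deeper; whether the resulting depth stays $O(\log n)$ depends on how the attached subtrees repopulate the shallow levels, and this is a nontrivial invariant you have not stated, let alone proved. Writing ``a scheduling argument guarantees availability throughout'' does not discharge the obligation.

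The paper avoids the recursion entirely by exhibiting the tree directly: take $G$ to be the complete binary tree on $n$ vertices (root labeled $1$), and set $A = I_n + M$ where $M_{ij}=1$ iff $i$ is the parent of $j$. Then row weight $= 1 + \#\text{children} \le 3$, column weight $= 1 + \#\text{parents} \le 2$ with the root's column of weight $1$ (odd) and all others of weight $2$ (even), and $A^{-1}_{ij}=1$ iff $j$ is a descendant of $i$ (including $i=j$), so every column of $A^{-1}$ lists the ancestors of one vertex and has weight $\le \log n + 1$. All five conditions are immediate, with no scheduling needed. If you want to keep your recursive framing, the fix is to choose the gluing so that the construction provably produces a complete (or near-complete) $(L-1)$-ary tree, which is precisely the object the paper writes down in closed form; short of that, the proof as stated is incomplete.
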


\subsubsection*{Bijections between $\Maj$ and $\Parity$}

In the paper~\cite{BCS-bilip} the authors asked whether there exists a
subset $A \subset \cc[n+1]$ of density $1/2$ such that
any bijection from $\cc$ to $A$ must map endpoints
of many edges of the hypercube far apart.
Specifically, for a mapping $\phi \colon \cc \to A$ they define
the \emph{average stretch of $\phi$} as
$\avgstretch(\phi) = \E_{x \in \cc, i \in [n]}[\dist(\phi(x),\phi(x+e_i)]$
and pose the following problem.

\begin{problem}\label{problem:unbounded stretch}
Is there a subset $A \subset \cc[n+1]$ of density $1/2$ such that
any bijection $\phi \colon \cc \to A$ has
$\avgstretch(\phi) = \omega(1)$.
\end{problem}

We remark that we are not aware of the existence of a subset $A \subset \cc[n+1]$ of density $1/2$
such that any bijection $f \colon \cc \to A$ has $\avgstretch(f) > 2.1$,
and we find this open problem very interesting.
It also makes sense to relax Problem~\ref{problem:unbounded stretch}
to an appropriate 2-set version, where we ask for two sets $A,B \in \{0,1\}^n$
of density $1/2$ such that any bijection $f \colon A \to B$ has large average stretch
in the appropriate sense.

Below we give a positive answer to the ``function analogue'' of this question.
Specifically, we show that any bijection from $\Parity$ to $\Maj$ must have large average stretch.

\begin{theorem}\label{thm:xor-maj}
    Any mapping $\phi \colon \cc \to \cc$ from $\Parity$ to $\Maj$ must satisfy
	\[
	\avgstretch(\phi) \geq c\sqrt{n}
	\]
	for some absolute constant $c>0$.

    On the other hand, there exists a $C$-Lipschitz mapping $\psi \colon \cc \to \cc$ from $\Maj$ to $\Parity$
    for some absolute constant $C$.
\end{theorem}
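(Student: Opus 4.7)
The statement has two parts, and the upper bound follows immediately by composition of the earlier constructions: Theorem~\ref{thm:maj-dict} yields a $C_1$-Lipschitz bijection $\psi_1 \colon \cc \to \cc$ from $\Maj$ to $\Dict$, and Theorem~\ref{thm:dict-xor} yields a $2$-Lipschitz bijection $\psi_2 \colon \cc \to \cc$ from $\Dict$ to $\Parity$, so the composition $\psi_2 \circ \psi_1$ is a $(2C_1)$-Lipschitz bijection satisfying $\Parity(\psi_2(\psi_1(x))) = \Dict(\psi_1(x)) = \Maj(x)$. The real content of the theorem is the lower bound.

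The plan for the lower bound is a global double-counting argument that uses only the fact that $\phi$ restricts to a bijection $\Parity^{-1}(b) \to \Maj^{-1}(b)$ for each $b \in \{0,1\}$. The key structural observation is that every edge $\{x,y\}$ of $\cc$ connects vertices of opposite parity, so exactly one of $\phi(x), \phi(y)$ lies in $\Maj^{-1}(0)$ (Hamming weight at most $(n-1)/2$, assuming $n$ is odd) and the other in $\Maj^{-1}(1)$ (Hamming weight at least $(n+1)/2$). Writing $|z|$ for the Hamming weight of $z$ and using the elementary bound $\dist(\phi(x),\phi(y)) \geq \bigl||\phi(x)| - |\phi(y)|\bigr|$, summing over all edges and grouping by vertex (each vertex is incident to exactly $n$ edges, all leading to neighbors of the opposite parity class) gives
\[
\sum_{\{x,y\}} \dist(\phi(x),\phi(y)) \;\geq\; n\Bigl(\sum_{y \in \Parity^{-1}(1)} |\phi(y)| \;-\; \sum_{x \in \Parity^{-1}(0)} |\phi(x)|\Bigr).
\]

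The pleasant feature is that because $\phi$ bijects each parity class onto the corresponding majority class, the right-hand side depends only on the weight distribution of $\cc$ and not on $\phi$: it equals $n\bigl(\sum_{z \in \Maj^{-1}(1)} |z| - \sum_{z \in \Maj^{-1}(0)} |z|\bigr)$. Using $\binom{n}{k} = \binom{n}{n-k}$, this simplifies to $n \sum_{k \leq (n-1)/2}(n-2k)\binom{n}{k} = \tfrac{n}{2}\sum_{k=0}^n |n-2k|\binom{n}{k}$, which is $\Theta(n \cdot \sqrt{n} \cdot 2^n)$ by the classical estimate $\E[|X - n/2|] = \Theta(\sqrt{n})$ for $X \sim \mathrm{Bin}(n, 1/2)$. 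Dividing by the $n \cdot 2^{n-1}$ edges of $\cc$ yields $\avgstretch(\phi) \geq c\sqrt{n}$. There is no real obstacle here: once the double-counting is set up, the dependence on $\phi$ cancels and the whole lower bound reduces to the intrinsic ``weight gap'' of order $\sqrt{n}$ between the upper and lower halves of the hypercube, with the matching upper bound from the composition confirming that $\sqrt{n}$ is the right order.
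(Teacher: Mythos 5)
Your proof of both parts is correct. The upper bound is essentially the same as the paper's (the paper composes the $\Maj \to \Dict$ map with the simple prefix-sum bijection $(x_1,\dots,x_n)\mapsto(x_1+x_2,\dots,x_{n-1}+x_n,x_n)$ rather than with Theorem~\ref{thm:dict-xor}, but either works and both give $O(1)$-Lipschitz).

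For the lower bound your argument is genuinely different from the paper's, though both ultimately rest on the same fact about the binomial distribution. The paper fixes a single direction $i$ and argues pointwise: a $0.9$-fraction of points $z$ are ``typical'', meaning $\bigl||z|-n/2\bigr|>0.01\sqrt n$; since $\Parity$ flips across every edge, $\phi(x)$ and $\phi(x+e_i)$ land on opposite sides of the middle layer, so whenever one of them is typical the stretch is $\geq 0.02\sqrt n$. This gives the stronger per-direction statement $\E_x[\dist(\phi(x),\phi(x+e_i))]\geq c\sqrt n$ for \emph{every} $i\in[n]$. Your argument instead double-counts globally: lower-bounding each $\dist(\phi(x),\phi(y))$ by the weight difference $|\phi(y)|-|\phi(x)|$ (which is nonnegative because the two images sit on opposite sides of the cut) and grouping by vertex makes the $\phi$-dependence cancel, leaving the fixed quantity $n\bigl(\sum_{z\in\Maj^{-1}(1)}|z|-\sum_{z\in\Maj^{-1}(0)}|z|\bigr)=\Theta(n^{3/2}2^n)$. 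Dividing by the $n2^n$ (directed) edges gives $\Omega(\sqrt n)$. Both approaches reduce to $\E\bigl[|X-n/2|\bigr]=\Theta(\sqrt n)$ for $X\sim\mathrm{Bin}(n,1/2)$ — the paper uses it as anticoncentration (most mass is $\Omega(\sqrt n)$ from the middle), you use it as a first-moment identity. Your version is perhaps cleaner as an exact computation and makes the $\phi$-independence transparent; the paper's version gives the per-direction statement directly. Note that your argument also yields the per-direction statement with a one-line change: restrict the sum to edges in direction $e_i$ and use that $x\mapsto x+e_i$ bijects $\Parity^{-1}(0)$ onto $\Parity^{-1}(1)$.
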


We prove Theorem~\ref{thm:xor-maj} in Section~\ref{sec:xor-to-maj}.

\subsubsection*{Bijections between $\Dict$ and a random balanced function}

We also show that for a random balanced function $f \colon \{0,1\}^n \to \{0,1\}$
with high probability
there is a mapping $\phi$ from the Dictatorship function to $f$
such that both $\phi$ and $\phi^{-1}$ have constant average stretch.
The proof of Theorem~\ref{thm:rand} appears in Section~\ref{sec:dict-to-rand}.

\begin{theorem}\label{thm:rand}
	Let $f \colon \{0,1\}^n \to \{0,1\}$ be a uniformly random balanced boolean function.
	Then, with probability $1-2^{-2^{\Omega(n)}}$ there exists a
	mapping $\phi \colon \{0,1\}^n \to \{0,1\}$ from $\Dict$ to $f$ such that
	for $1-O(1/n)$ fraction of $x \in \{0,1\}^n$ it holds that $\dist(x,f(x)) \leq 2$.
	In particular, $\phi$ satisfies $\avgstretch(\phi) = O(1)$
	and $\avgstretch(\phi^{-1}) = O(1)$.
\end{theorem}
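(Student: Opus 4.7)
The plan is to construct $\phi$ as the identity on points where $f$ and $\Dict$ agree, and to handle the disagreement set by a carefully chosen bijection between the two sides. Let $C = \{x : x_1 = 1, f(x) = 0\}$ and $B = \{y : y_1 = 0, f(y) = 1\}$. Balance of $f$ forces $|C| = |B|$, and concentration for hypergeometric statistics gives $|C| = |B| = (1 \pm o(1))\, 2^{n-2}$ with failure probability $2^{-\Omega(2^n)}$. Setting $\phi(x) = x$ for $x \notin B \cup C$, it suffices to construct a bijection $\pi : C \to B$ so that $\dist(c, \pi(c)) \leq 2$ for all but an $O(1/n)$ fraction of $c$.

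We build $\pi$ in three stages. \emph{Stage 1:} for each $c \in C$ with $c + e_1 \in B$, set $\pi(c) = c + e_1$; this distance-$1$ matching covers roughly half of $C$, with standard Chernoff concentration. Let $U \subseteq C$ and $V \subseteq B$ denote the unmatched remainders, each of size $\approx 2^{n-3}$. \emph{Stage 2:} complete a near-perfect matching between $U$ and $V$ using distance-$2$ edges of the form $(c,\, c + e_1 + e_j)$ with $j \in \{2, \ldots, n\}$. \emph{Stage 3:} match whatever remains arbitrarily, incurring Hamming distance at most $n$ on an $O(1/n)$ fraction of vertices.

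The technical heart is Stage 2. Flipping bit $1$ maps $U$ onto $U' \subseteq \{x : x_1 = 0\}$, and Stage-$2$ edges between $U$ and $V$ become distance-$1$ edges of the sub-cube $\{x : x_1 = 0\}$ between $U'$ and $V$. Explicitly, $U' = \{y : y_1 = 0,\ f(y) = f(y + e_1) = 0\}$ and $V = \{y : y_1 = 0,\ f(y) = f(y + e_1) = 1\}$ are disjoint random-like subsets of density $\approx 1/4$ in the sub-cube. For $u' \in U'$ and each $j \neq 1$, the event $u' + e_j \in V$ has probability $\approx 1/4$, so the expected degree of $u'$ in the $U'$--$V$ bipartite graph is $\Theta(n)$. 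A per-vertex Chernoff bound gives failure probability $2^{-\Omega(n)}$, and a concentration argument on the total count of low-degree vertices shows that at most $O(2^n/n)$ vertices have small degree, with failure probability $2^{-2^{\Omega(n)}}$. A defect-Hall (or random-bipartite-graph matching) argument applied to the high-degree part then produces a matching covering all but $O(2^n/n)$ vertices of each side.

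The average-stretch bound follows from the triangle inequality: whenever both $x$ and $x + e_i$ satisfy $\dist(\cdot, \phi(\cdot)) \leq 2$ we have $\dist(\phi(x), \phi(x + e_i)) \leq 5$, and the $O(1/n)$ fraction of edges touching Stage-$3$ vertices contribute at most $O(n) \cdot O(1/n) = O(1)$ in total, yielding $\avgstretch(\phi) = O(1)$; by symmetry $\avgstretch(\phi^{-1}) = O(1)$ as well. The main obstacle will be rigorously establishing the Stage-$2$ near-perfect matching at the doubly-exponential probability level: the bipartite graph is not a standard random graph but inherits structural dependencies from the cube geometry and from the balance of $f$, and these must be tracked carefully when passing from the vertex-level degree estimates to a global matching-existence statement.
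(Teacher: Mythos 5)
Your decomposition is attractive and, after unwinding, reduces to essentially the same core problem the paper solves: you end up needing a near-perfect matching, along subcube edges, between two subsets $U'$ and $V$ of $\{y: y_1=0\}$ that each occupy roughly a $1/4$ fraction of the subcube and partition it together with the ``medium'' points where exactly one of $f(y), f(y+e_1)$ equals $1$. In the paper's language these are precisely the \emph{poor} and \emph{rich} vertices of the set $A = f^{-1}(1)$, so your Stage 2 is the paper's Lemma~\ref{lemma:rand HLN}. The surrounding packaging differs --- you set $\phi$ to the identity off the disagreement set and match $C$ to $B$ directly, whereas the paper factors through an injective map $\{0,1\}^{n-1}\to\{0,1\}^n$ landing in a random set $A$, applied once for $A_1=f^{-1}(1)$ and once for $A_0=f^{-1}(0)$ --- but that is cosmetic; both give $\dist(x,\phi(x))\le 2$ for all but $O(1/n)$ of inputs, and the average-stretch calculation at the end is the same.

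The gap is Stage~2 itself. You write ``a per-vertex Chernoff bound gives failure probability $2^{-\Omega(n)}$, and a concentration argument on the total count of low-degree vertices shows that at most $O(2^n/n)$ vertices have small degree, with failure probability $2^{-2^{\Omega(n)}}$'' and then invoke defect-Hall. This is not a proof. First, the per-vertex degree events are heavily dependent (both through the shared function $f$ and through overlapping neighborhoods), so the leap from a per-vertex $2^{-\Omega(n)}$ bound to a $2^{-2^{\Omega(n)}}$ bound on the count of low-degree vertices needs an explicit independence or bounded-difference argument that you don't give. Second, and more seriously, high minimum degree in a bipartite graph does not imply a near-perfect matching --- defect-Hall requires expansion of \emph{every} subset of $U'$, not just that no vertex is isolated, and for a graph inheriting structure from the hypercube and from the balance of $f$ this expansion is exactly what has to be proved. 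The paper's solution is the Hajnal--Lovász--Nešetřil-style sequential matching algorithm (Algorithm~\ref{alg:HLN}): process directions $e_1,\dots,e_{n/2}$ in order, greedily matching poor to rich along each. The crucial observation (Claim~\ref{claim:indHLN}) is that after $k$ rounds a vertex's status depends only on its first $k$ coordinates, which yields both the clean recursion $p_{i+1}=p_i(1-p_i)$ (so $p_{n/2}<2/n$) and --- by partitioning into $2^{n/2}$-sized blocks indexed by the first $n/2$ coordinates, within which statuses are mutually independent --- a Chernoff bound with failure probability $2^{-2^{\Omega(n)}}$. You would need this algorithm, or a rigorous substitute for it, to close the gap; the ``defect-Hall'' sentence does not.
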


This implies that for two random balanced functions
with high probability there is a bijective mapping between them
such that both the mapping and its inverse have constant average stretch.

\begin{corollary}\label{cor:rand2}
	Let $f,g \colon \{0,1\}^n \to \{0,1\}$ be two uniformly random balanced boolean function.
	Then, with probability $1-2^{-2^{\Omega(n)}}$ there exists a
	bijection $\phi \colon \{0,1\}^n \to \{0,1\}$ from $f$ to $g$ that
	satisfies $\avgstretch(\phi) = O(1)$
	and $\avgstretch(\phi^{-1}) = O(1)$.
\end{corollary}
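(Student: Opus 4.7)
The plan is to derive Corollary~\ref{cor:rand2} directly from Theorem~\ref{thm:rand} by applying it independently to $f$ and $g$. First I would invoke Theorem~\ref{thm:rand} twice: applied to $f$ it yields, with probability $1-2^{-2^{\Omega(n)}}$, a mapping $\phi_f$ from $\Dict$ to $f$ such that $\dist(\phi_f(x),x) \leq 2$ for a $1-O(1/n)$ fraction of $x \in \cc$; applied to $g$ it yields an analogous bijection $\phi_g$. A union bound gives both mappings simultaneously with probability $1-2^{-2^{\Omega(n)}}$. I would then set $\phi := \phi_g \circ \phi_f^{-1}$. Since $\phi_f$ sends $\Dict^{-1}(b)$ bijectively onto $f^{-1}(b)$ and $\phi_g$ sends $\Dict^{-1}(b)$ bijectively onto $g^{-1}(b)$ for each $b \in \{0,1\}$, the composition $\phi$ is a bijection from $f$ to $g$ in the sense of the paper.

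The heart of the argument is to show that $\phi$ inherits the ``close to identity on most inputs'' property and hence has constant average stretch. Let $A_f := \{x \in \cc : \dist(\phi_f(x),x) \leq 2\}$ and $A_g := \{x \in \cc : \dist(\phi_g(x),x) \leq 2\}$, both of density $1-O(1/n)$ by Theorem~\ref{thm:rand}. Since $\phi_f$ is a bijection, the image $\phi_f(A_f) = \{y \in \cc : \dist(\phi_f^{-1}(y),y) \leq 2\}$ also has density $1-O(1/n)$, so the set $S := \phi_f(A_f) \cap \phi_f(A_g)$ retains density $1-O(1/n)$. For every $x \in S$ one has $\dist(\phi_f^{-1}(x),x) \leq 2$ and $\phi_f^{-1}(x) \in A_g$, and therefore by the triangle inequality
\[
    \dist(\phi(x),x) \leq \dist\bigl(\phi_g(\phi_f^{-1}(x)),\,\phi_f^{-1}(x)\bigr) + \dist\bigl(\phi_f^{-1}(x),\,x\bigr) \leq 4.
\]

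Finally I would bound $\avgstretch(\phi)$ edge by edge. For every edge $(x,x+e_i)$ with both endpoints in $S$, two applications of the triangle inequality give $\dist(\phi(x),\phi(x+e_i)) \leq 4 + 1 + 4 = 9$; the fraction of such ``good'' edges is $1-O(1/n)$. The remaining $O(1/n)$ fraction of ``bad'' edges each contribute at most $n$ to the sum, hence at most $O(1)$ on average. This gives $\avgstretch(\phi) = O(1)$, and the identical argument applied to $\phi^{-1} = \phi_f \circ \phi_g^{-1}$ (using that $\phi_g^{-1}$ is also close to the identity on a $1-O(1/n)$ fraction of inputs by the same bijectivity argument) yields $\avgstretch(\phi^{-1}) = O(1)$. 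I do not anticipate any real obstacle; the only mildly delicate step is transporting the ``good'' set for $\phi_g$ through $\phi_f^{-1}$ without losing too much density, which is immediate from the fact that $\phi_f$ preserves the uniform measure.
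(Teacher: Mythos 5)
Your proposal is correct and follows essentially the same route as the paper: invoke Theorem~\ref{thm:rand} for $f$ and $g$ separately, take $\phi = \phi_g \circ \phi_f^{-1}$, observe that bijectivity of $\phi_f$ lets you push the "close to identity" property through the composition (the paper phrases this with a union bound over probabilities rather than with your explicit set $S$, but it is the same calculation), and finish with the triangle-inequality edge argument giving stretch at most $9$ on a $1-O(1/n)$ fraction of edges and at most $n$ on the rest.
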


Indeed, let $\phi_f, \phi_g$ be bijections given by Theorem~\ref{thm:rand} when
applied on $f$ and $g$ respectively. Then it is easy to see that the composition
of $\phi_g$ with the inverse of $\phi_f$ gives us the desired mapping
$\phi = \phi_g \circ \phi_f^{-1}$.
Indeed, since $\phi_f$ is a bijection, by Theorem~\ref{thm:rand} it
satisfies $\Pr_{x \in \cc}[\dist(\phi_f(x),x) \leq 2]  =1 - O(1/n)$,
and hence
\begin{eqnarray*}
\Pr_{x}[\dist(\phi_g(\phi_f^{-1}(x)),x) \geq 4]
& \leq &
\Pr_{x}[\dist(\phi_g(\phi_f^{-1}(x)),\phi_f^{-1}(x)) \geq 2]
+
\Pr_{x}[\dist(\phi_f^{-1}(x),x) \geq 2] \\
&  = & O(1/n).
\end{eqnarray*}
Therefore, for $O(1/n)$ fraction of the edges it holds that
$\dist(\phi(x),\phi(x+e_i)) \leq 9$. For the remaining $O(1/n)$
fraction of the edges their endpoints are trivially mapped to distance at most $n$
and so the average stretch of $\phi$ is $O(1)$, as required.
\subsection{Notation}\label{sec:notation}

The functions used in this paper are the following.
The function $\Maj \colon \cc \to \{0,1\}$ is defined as
\[
    \Maj(x) =   \begin{cases}
                    1 & \text{ if } \sum_{i=1}^n x_i > n/2 \\
                    0 & \text{ otherwise.}
                \end{cases}
\]
The function $\Dict \colon \cc \to \{0,1\}$ is defined as $\Dict(x) = x_1$,
i.e. its value is dictated by the first coordinate.
The function $\Parity$ is defined as $\Parity \colon \cc \to \{0,1\}$ is defined as $\Parity(x) = \sum_{i=1}^n x_i \pmod 2$.

A mapping $\phi \colon \{0,1\}^n \to \{0,1\}^n$ is said to be {\em $C$-Lipschitz} if
for every $x,y \in \{0,1\}^n$ it holds that
$\dist(\phi(x),\phi(y)) \leq  C \dist(x,y)$, where $\dist(\cdot,\cdot)$
denotes the Hamming distance between the strings.

Note that in order to prove that a mapping $\phi$ is $C$-Lipschitz
it is enough to show that for every edge of the hypercube $(x,x+e_i)$ it holds that
$\dist(\phi(x),\phi(x+e_i)) \leq  C$.

As a relaxation of the notion of being $C$-Lipschitz
define the average stretch of $\phi$ as
$\avgstretch(\phi) = \E_{x,i}[\dist(\phi(x),\phi(x+e_i)]$.
This means that if $\avgstretch(\phi)$ is large then
many edges of the hypercube far apart, while if it is small,
then the endpoints of an average edge are mapped by $\phi$ close to each other.

\section{A Bijection from $\Maj$ to $\Dict$}\label{sec:maj-to-dict}

In this section we prove Theorem~\ref{thm:maj-dict}.
The proof is based on the idea from~\cite{BCS-bilip},
which relies on a classical partition of the vertices of $\cc$ to symmetric chains,
due to De~Bruijn, Tengbergen, and Kruyswijk~\cite{BTK51}, where a symmetric chain
is a path $(c_k, c_{k+1}, \ldots, c_{n-k})$ in $\cc$, such that
each $c_i$ has Hamming weight $i$.

De~Bruijn, Tengbergen, and Kruyswijk~\cite{BTK51} suggested a recursive
algorithm that partitions $\cc$ to symmetric chains. We will follow
the presentation of the partition described in~\cite{LW2001} (see Problem 6E
in Chapter 6), and we shall call it the \emph{BTK partition}.
We describe the partition by specifying for each $x \in \cc$ the chain $C_x$
that contains $x$.

The algorithm is iterative. During the running of the algorithm, every coordinate
of $x$ is either marked or unmarked, where we denote a marked $0$ by $\hat{0}$
and a marked $1$ by $\hat{1}$.
In each step, the algorithm chooses a consecutive pair $10$, marks it
by $\hat{1}\hat{0}$, temporarily deletes it, and repeats the process.
The algorithm halts when no such consecutive pair is left, i.e., the remaining string is of the form
$00\dots01\dots11$. We call this stage of the algorithm the \emph{marking stage},
and denote the marked string by $\markhat(x) \in \{0,1,\hat{0},\hat{1}\}^n$.
Define the \emph{signature of $x$}, denoted by $\sig(x) \in \{0,1,\emptyslot\}$ as follows:
if the $i^{\text{th}}$ bit of $x$ was marked then $\sig(x)_i = x_i$
and otherwise, $\sig(x)_i = \emptyslot$.
Finally, define $C_x$ to be the collection of all strings whose signature is
equal to $\sig(x)$. That is all strings $y$ agree with $x$
in the marked coordinates of $x$, and in the remaining coordinates
$y$ is of the form $00\dots01\dots11$.

For example, consider the string $x = 01100110$. In the first iteration, the
algorithm may mark the third and fourth bits to obtain $01\hat{1}\hat{0}0110$.
Then, the second and fifth bits are marked 0\^{1}\^{1}\^{0}\^{0}110. Lastly,
the rightmost two bits are marked, and we obtain the marked string
$\markhat(x) = 0\hat{1}\hat{1}\hat{0}\hat{0}1\hat{1}\hat{0}$.
Therefore, the signature of $x$ is $\sig(x) = \emptyslot 1100 \emptyslot 10$ and
$C_x = \{ \und{0}1100\und{0}10, \und{0}1100\und{1}10, \und{1}1100\und{1}10 \}$.

Note that although the algorithm has some degree of freedom when choosing
the order of marking the $10$ pair out of possibly many pairs in a given iteration,
the chain $C_x$ is, in fact, independent of the specific choices that were made.
That is, $\sig(x)$ is a function of $x$, and does not depend on the specific
order in which the algorithm performs the marking. An alternative way to see it
is to think of $1$'s as opening parentheses and of $0$'s as closing parentheses,
and then mark all maximal sub-sequences of legal parentheses in the given string $x$.
As a consequence, we may choose the $10$ pairs in any order we wish.
We will use this fact in the proof of Theorem~\ref{thm:maj-dict}.

The key part of the proof is the following lemma.
We remark that this lemma appears implicitly in~\cite{BCS-bilip}.

\begin{lemma}\label{lemma:BTK dist}
	Let $n \in \N$, and let $x,y \in \cc$ be such that $\dist(x,y) = 1$.
	Let $C_x = \{c_k,c_{k+1}\dots,c_{n-k} \}$ and $C_y  = \{c'_{k'},c_{k'+1}\dots,c'_{n-k'} \}$
	be the chains of the BTK partition that contain $x$ and $y$ respectively.
	Then, $\dist(\sig(x),\sig(y)) \leq 3$, where $\dist(\cdot,\cdot)$ denotes
	the Hamming distance between two strings,
	that is, the number of coordinates where the two strings differ.
	In particular this implies that
	\begin{enumerate}
	\item $|k-k'| \leq 1$.
	\item If $c_j \in C_x$ and $c'_{j'} \in C_y$ for some $j \in [k,n-k]$ and $j' \in [k',n-k']$,
			then $\dist( c_j,c'_{j'} ) \leq |j-j'| + 6$.
	\end{enumerate}
\end{lemma}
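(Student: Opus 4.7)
The plan is to encode each string $z$ by its $\pm 1$ prefix sum $g_z(j) = \sum_{i \le j}(2z_i - 1)$ and to exploit the classical parenthesis picture, in which marking a $10$ pair corresponds to closing an open parenthesis. In this picture, a position $j$ is \emph{unmarked} in $z$ iff either $z_j = 0$ and $g_z(j)$ is a strict new prefix-minimum of $g_z$ at $j$, or $z_j = 1$ and $g_z(j) \le g_z(k)$ for all $k \ge j$. Assume without loss of generality that $x_{i^*} = 0$ and $y_{i^*} = 1$, so that $g_y$ agrees with $g_x$ on $\{0, \ldots, i^* - 1\}$ and equals $g_x + 2$ on $\{i^*, \ldots, n\}$. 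Reading off the characterization above, the only positions $j \neq i^*$ whose marked/unmarked status can change are those with $j < i^*$ and $x_j = 1$ (possibly going from marked to unmarked) or with $j > i^*$ and $x_j = 0$ (possibly going from unmarked to marked).

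Set $m = \min_{k < i^*} g_x(k)$ and $m' = \min_{k \ge i^*} g_x(k)$. A direct computation shows that a left flip at $j < i^*$ happens iff $j$ attains the suffix-minimum of $g_x$ on $[j, i^*-1]$ and $g_x(j) \in \{m'+1, m'+2\}$; a symmetric statement with $g_x(j) \in \{m-2, m-1\}$ holds for right flips. Among suffix-minimum positions with $x_j = 1$ the values $g_x(j)$ are strictly monotone in $j$, which caps the number of left flips at two, and similarly for right flips. A case split on the sign of $m' - m$ then shows the two kinds of flips cannot both be plentiful: if $m' \ge m$ there are no right flips; if $m' \le m - 2$ there are no left flips; and if $m' = m - 1$ at most one of each is possible. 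Hence the total number of non-$i^*$ flips is at most two. Since the marked set $M(z)$ always has even size (marked positions come in pairs), $|M(x) \triangle M(y)|$ is even and therefore lies in $\{0, 2\}$. Comparing $\sig(x)$ and $\sig(y)$ position by position then gives $\dist(\sig(x), \sig(y)) \le |M(x) \triangle M(y)| + [i^* \in M(x) \cap M(y)] \le 3$.

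For part (1), the minimum weight in the chain of $z$ is $|M(z)|/2$, so $\bigl||M(x)| - |M(y)|\bigr| \le 2$ gives $|k - k'| \le 1$. For part (2), pick $j^*$ in the overlap $[\max(k,k'), \min(n-k, n-k')]$ that also lies in $[\min(j,j'), \max(j,j')]$, which is non-empty thanks to part (1), and apply the triangle inequality within each chain:
\[
\dist(c_j, c'_{j'}) \le |j - j^*| + \dist(c_{j^*}, c'_{j^*}) + |j^* - j'| = |j - j'| + \dist(c_{j^*}, c'_{j^*}).
\]
It remains to prove $\dist(c_{j^*}, c'_{j^*}) \le 6$, which I would do by partitioning $\{1, \ldots, n\}$ as $(M(x) \cap M(y)) \sqcup (M(x) \setminus M(y)) \sqcup (M(y) \setminus M(x)) \sqcup (U_x \cap U_y)$, where $U_z = \{1, \ldots, n\} \setminus M(z)$, and case-analyzing the profile $(|M(x) \setminus M(y)|, |M(y) \setminus M(x)|) \in \{(0,0), (1,1), (2,0), (0,2)\}$ determined in the preceding paragraph.

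The main obstacle is this final bound: the two chains' $0\cdots 0\,1\cdots 1$ patterns live on the possibly different unmarked sets $U_x$ and $U_y$, so one must track the rightmost $(j^* - k)$ positions of $U_x$ against the rightmost $(j^* - k')$ positions of $U_y$. These ``split points'' between the $0$-block and the $1$-block can sit arbitrarily far from $i^*$, but using $|U_x \triangle U_y| \le 2$ and $|k - k'| \le 1$ a monotone counting argument shows that the $1$-supports of $c_{j^*}$ and $c'_{j^*}$ agree up to a bounded number of positions, which closes the proof.
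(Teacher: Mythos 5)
Your approach is genuinely different from the paper's. The paper runs the marking algorithm in three phases (prefix of length $i^*-1$, suffix of length $n-i^*$, then the residual string), observes that phases 1 and 2 produce identical markings on $x$ and $y$, and reduces the whole lemma to a concrete three-way case analysis on a string of the form $0^a1^b\circ(0\text{ or }1)\circ 0^c1^d$. You instead work globally with the $\pm 1$ prefix-sum $g_z$ and the classical parenthesis characterization of unmarked positions (strict new prefix-minima among the $0$'s, global suffix-minima among the $1$'s), and bound the number of positions whose marked/unmarked status ``flips'' when passing from $x$ to $y$. I checked the central pieces of that analysis: the iff-characterization of left flips ($j<i^*$, $x_j=1$, $j$ a suffix-min on $[j,i^*-1]$, $g_x(j)\in\{m'+1,m'+2\}$) and right flips, the strict monotonicity of $g_x$ along suffix-min/prefix-min positions capping each kind at two, the trichotomy $m'\ge m$ / $m'\le m-2$ / $m'=m-1$ (where you implicitly also need the small observation that $x_j=1$ forces $g_x(j)\ge m+1$, not just $\ge m$, to kill the $m'\le m-2$ and $m'=m-1$ sub-cases), and the parity argument $|M(x)\triangle M(y)|\in\{0,2\}$. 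These all hold, and together with the accounting $\dist(\sig(x),\sig(y))=|M(x)\triangle M(y)|+[i^*\in M(x)\cap M(y)]\le 3$, the main bound is sound. This is a nice structural alternative to the paper's finite case-check, and it better explains \emph{why} the constant is $3$.

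However, there is a real gap in part (2). You reduce to showing $\dist(c_{j^*},c'_{j^*})\le 6$ for a single common level $j^*$, but then explicitly defer the argument (``which I would do by \ldots'', ``a monotone counting argument shows \ldots''). That last step is the technical heart of part (2), and leaving it as a sketch leaves the lemma unproven. For comparison, the paper closes this directly and more economically without introducing $j^*$: outside $U_x\cap U_y$ the strings $c_j$ and $c'_{j'}$ differ in at most $\dist(\sig(x),\sig(y))\le 3$ coordinates, and inside $U_x\cap U_y$ both restrictions are of the form $0\cdots 0\,1\cdots 1$, so they differ in at most the difference of their weights there, which is bounded by $|j-j'|+3$; summing gives $|j-j'|+6$. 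Your plan would work --- the missing computation is $|w_1-w_2|\le |k-k'| + \max(|M(x)\setminus M(y)|,|M(y)\setminus M(x)|)\le 1+2=3$ for the weights on $U_x\cap U_y$, plus $|M(x)\triangle M(y)|+[i^*\in M(x)\cap M(y)]\le 3$ off $U_x\cap U_y$ --- but you should actually carry it out rather than gesture at it, or simply adopt the direct comparison of $c_j$ with $c'_{j'}$ and skip the detour through $j^*$.
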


In particular, if $x \sim y$, then the Hausdorff distance between $C_x$ and $C_y$ is at most $d_H(C_x,C_y) \leq 7$.

\begin{proof}
	Fix $x,y \in \cc$ such that they differ only in the $i$th
	coordinate and $x_i=0$ and $y_i=1$.
	We may perform the marking stage on each of them in three steps:
		\begin{enumerate}
		  \item Perform the marking stage on the prefix of the string of length $i-1$.
		  \item Perform the marking stage on the suffix of the string of length $n-i$.
		  \item Perform the marking stage on the resulting, partially marked, string.
	\end{enumerate}
Since $x$ and $y=x+e_i$ agree on all but the $i^{\text{th}}$ coordinate, the
running of the marking stage on $x$ and $y$ in steps 1 and 2 yield the same marking,
and so $\sig(x)$ agrees with $\sig(y)$ in these coordinates, and so,
we may ignore the coordinates marked in the first two steps.

Next we analyze the difference between the markings after the third step.
Denote by $s \in \markedcube[i-1]$ and $t \in \markedcube[n-i]$ the
two partially marked strings such that the resulting strings after the second
step on inputs $x$ and $y$ are $s \circ 0 \circ t$ and $s \circ 1 \circ t$
respectively. 	
Let us suppose for concreteness that the string $s$ contains $a$ unmarked
zeros and $b$ unmarked ones, and the string $t$ contains $c$ unmarked zeros
and $d$ unmarked ones. Recall that at the end of the marking stage, all
unmarked zeros are to the left of all unmarked ones in both $s$ and $t$.
Therefore, we may assume that
\[
	x = 0^a 1^b \circ 0 \circ 0^c 1^d
	\qquad \text{and} \qquad
	y = 0^a 1^b \circ 1 \circ 0^c 1^d.
\]

At this point, it is fairly easy to be convinced that $\dist(\sig(x),\sig(y))$
is bounded by \emph{some} constant. Proving that the constant is $3$
is done by a somewhat tedious case analysis, according to the relations
between $a,b,c$ and $d$.
\begin{claim}\label{claim:case analysis}
    For every $a,b,c,d \in \N$, we have
    \[
        \dist(
        	\sig(0^a 1^b \circ 0 \circ 0^c 1^d),
	        \sig(0^a 1^b \circ 1 \circ 0^c 1^d))
        \leq 3.
    \]
\end{claim}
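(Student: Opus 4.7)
The plan is to compute both signatures in closed form and then compare them position by position in three subcases.

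Observation: marking any string of the shape $0^\alpha\,1^\beta\,0^\gamma\,1^\delta$ cannot touch the prefix $0^\alpha$ or the suffix $1^\delta$ (neither participates in any $10$ pattern), and inside the central block $1^\beta 0^\gamma$ the marking pairs the last $\min(\beta,\gamma)$ ones with the first $\min(\beta,\gamma)$ zeros. Applying this observation to $x=0^a\,1^b\,0^{c+1}\,1^d$ and $y=0^a\,1^{b+1}\,0^c\,1^d$, and setting $m_x:=\min(b,c+1)$ and $m_y:=\min(b+1,c)$, I obtain the closed forms
\[
\sig(x)=\emptyslot^{a+b-m_x}\,1^{m_x}\,0^{m_x}\,\emptyslot^{c+1-m_x+d}, \qquad \sig(y)=\emptyslot^{a+b+1-m_y}\,1^{m_y}\,0^{m_y}\,\emptyslot^{c-m_y+d}.
\]
A direct check of the ranges $b<c$, $b=c$, $b\ge c+1$ shows $m_y-m_x\in\{+1,0,-1\}$, respectively, so the run lengths of the two signatures differ by at most $2$.

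Next I split on the value of $m_y-m_x$. When $m_y=m_x+1$ (i.e.\ $b<c$) the two initial $\emptyslot$-runs have the same length, and the only disagreements occur at the right end of the middle block: one at position $a+b+1$ where $\sig(x)=0$ but $\sig(y)=1$, and two at positions $a+b+m_x+1$ and $a+b+m_x+2$ where $\sig(x)=\emptyslot$ but $\sig(y)=0$, for a total of $3$ disagreements. When $m_y=m_x$ (i.e.\ $b=c$), the middle block of $\sig(y)$ is the middle block of $\sig(x)$ shifted one step to the right, producing a disagreement at each edge of the shift plus one at position $a+b+1$, again $3$ in total. The case $m_y=m_x-1$ (i.e.\ $b\ge c+1$) is the mirror of the first: two disagreements at the left end of the middle block (positions $a+b-m_x+1$ and $a+b-m_x+2$) plus one at position $a+b+1$.

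The only fiddly point is the bookkeeping when some of the runs collapse, for instance when $m_x=0$ or $m_y=0$, or when one of the short intervals above is empty; in every such degenerate case the disagreement count strictly decreases, so the bound $\dist(\sig(x),\sig(y))\le 3$ still holds. This is the main (and essentially only) obstacle in the argument.
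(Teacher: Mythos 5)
Your proof is correct and follows essentially the same route as the paper: both split into the three cases $b<c$, $b=c$, $b>c$, compute the two signatures as explicit run patterns, and compare. Your unified closed-form expression $\sig(0^\alpha 1^\beta 0^\gamma 1^\delta)=\emptyslot^{\alpha+\beta-m}1^m0^m\emptyslot^{\gamma-m+\delta}$ with $m=\min(\beta,\gamma)$ is a clean way to package what the paper writes out separately case by case, and your explicit enumeration of the disagreeing positions (together with the observation that degenerate collapses only reduce the count) makes precise what the paper leaves as ``easy to verify.''
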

We postpone the proof of the claim for now,
and move to the ``in particular'' part.
Denote by $U_x = \{i \in [n] \colon \sig(x) = \emptyslot\}$ the unmarked coordinates of $x$.
The first item follows from the fact that $|U_x| = n-2k$,
and similarly $|U_y| = n-2k'$. Therefore, if $\dist(\sig(x),\sig(y)) \leq 3$
it follows that $2|k-k'| \leq |U_x \Delta U_y| \leq 3$,
and hence, since $k$ and $k'$ are integers it follows that $|k-k'|\leq 1$.

For the second item take $c_j \in C_x$ and $c'_{j'} \in C_y$.
Then $c_j$ and $c'_{j'}$ differ in at most 3 coordinates outside $U_x \cap U_y$
since $x$ and $y$ have almost the same signature except at most in three coordinates.
Inside the set $U_x$ the string $c_j$ is just
a sequence of zeros followed by a sequence of ones such that its weight is $j$,
and similarly $c'_{j'}$ restricted to $U_y$ is a sequence of zeros followed by a sequence of ones
such that its weight is $j'$. Hence, inside the set $U_x \cap U_y$ the strings
$c_j$ and $c'_{j'}$ differ in at most $|j-j'|+3$ coordinates.
Therefore, $\dist(c_j,c'_{j'}) \leq |j-j'| + 6$, which completes the proof of
Lemma~\ref{lemma:BTK dist}.
\end{proof}

We now return to the proof of Claim~\ref{claim:case analysis}.
\begin{proof}[Proof of Claim~\ref{claim:case analysis}]
Let $x = 0^a 1^b \circ 0 \circ 0^c 1^d$ and $y = 0^a 1^b \circ 1 \circ 0^c 1^d$.
Our goal is to show that $\dist(\sig(x),\sig(y)) \leq 3$.
The proof uses the following case analysis.

\paragraph{Case 1 ($\boldsymbol{b = c}$).}
In this case we have
\[
	x = 0^a \circ 1^{b} 0^{b} \circ 0 1^d
	\qquad \text{and} \qquad
	y = 0^a 1 \circ 1^{b} 0^{b} \circ 1^d.
\]
Their signatures are
\[
	\sig(x) = \emptyslot^{a} \circ 1^{b} 0^{b} \circ \emptyslot^{d+1}
	\qquad \text{and} \qquad
	\sig(y) = \emptyslot^{a+1} \circ 1^{b} 0^{b} \circ \emptyslot^{d}.
\]
It is easy to verify that in this case the distance
\[
        \dist(
        	\sig(0^a 1^b \circ 0 \circ 0^c 1^d),
	        \sig(0^a 1^b \circ 1 \circ 0^c 1^d))
        \leq 3.
\]

\paragraph{Case 2 ($\boldsymbol{b > c}$).}
In this case we have
\[
	x = 0^a  1^{b-c-1} \circ 1^{c+1} 0^{c+1} \circ 1^d
	\qquad \text{and} \qquad
	y = 0^a 1^{b+1-c} \circ 1^{c} 0^{c} \circ 1^d.
\]
Their signatures are
\[
	\sig(x) = \emptyslot^{a+b-c-1} \circ 1^{c+1} 0^{c+1} \circ \emptyslot^{d}
	\qquad \text{and} \qquad
	\sig(y) = \emptyslot^{a+b-c+1} \circ 1^{c} 0^{c} \circ \emptyslot^{d}.
\]
It is easy to verify that
\[
        \dist(
        	\sig(0^a 1^b \circ 0 \circ 0^c 1^d),
	        \sig(0^a 1^b \circ 1 \circ 0^c 1^d))
        \leq 3.
\]

\paragraph{Case 3 ($\boldsymbol{b < c}$).}
In this case we have
\[
	x = 0^a  \circ 1^{b} 0^{b} \circ 0^{c-b+1} 1^d
	\qquad \text{and} \qquad
	y = 0^a \circ 1^{b+1} 0^{b+1} \circ 0^{c-b-1} 1^d.
\]
Their signatures are
\[
	\sig(x) = \emptyslot^{a} \circ 1^{b} 0^{b} \circ \emptyslot^{d+c-b+1}
	\qquad \text{and} \qquad
	\sig(y) = \emptyslot^{a} \circ 1^{b+1} 0^{b+1} \circ \emptyslot^{d+c-b-1}
\]
It is also easy to verify that in this case the distance is at most
\[
        \dist(
        	\sig(0^a 1^b \circ 0 \circ 0^c 1^d),
	        \sig(0^a 1^b \circ 1 \circ 0^c 1^d))
        \leq 3.
\]

\end{proof}

\subsection{The mapping}\label{sec:the mapping}

In this section we finally prove Theorem~\ref{thm:maj-dict}.
In order to prove the theorem it will be convenient to partition
the hypercube as follows.
For each BTK chain $C$ of the $(n-1)$-dimensional hypercube define
$P_{C} = \{c \circ b \in \cc \colon c \in C, b \in \{0,1\}\}$.
There is a clear one-to-one correspondence between the
$(n-1)$-dimensional BTK chains and our partition of $\cc$.
For example, the block $P_C$ corresponding to the chain $C = \{00,01,11\}$
consists of the following six elements $P_C = \{000,001,010,011,110,111\}$.

\begin{proof}

We define the mapping $\psi$ as follows. Let $n \in \N$ be an odd integer.
For $x \in \cc$, write it as $x = x' \circ x_n$, where $x' \in \cc[n-1]$
represents the first $n-1$ bits of $x$, and $x_n \in \{0,1\}$ is the last bit of $x$.
Let $C = \{ c_k, c_{k+1}, \ldots, c_{n-1-k}\}$ be a symmetric chain
in the BTK partition that contains $x'$,
and let $j$ be the index such that $x' = c_j$.
Define
\begin{equation}\label{eq:definition of f}
    \psi(x = x' \circ x_n) \eqdef \begin{cases}
                    1 \circ c_{2j-(n-k)+x_n}    & \text{if } |x| \geq (n+1)/2; \\
                    0 \circ c_{(n+k)-2j-1-x_n}    & \text{if } |x| \leq (n-1)/2. \\
                \end{cases}
\end{equation}

In order to illustrate the mapping, let us consider as an example the case of $n=3$
and the block $P_C$ that corresponds to the chain $C = \{00,01,11\}$.
\[
  \begin{array}{cccc}
        &\psi(11 \circ 1) & = & 1 \circ 11,\\
        &\psi(11 \circ 0) & = & 1 \circ 01,\\
        &\psi(01 \circ 1) & = & 1 \circ 00,\\
        & --- & - & --- \\
        &\psi(01 \circ 0) & = & 0 \circ 00,\\
        &\psi(00 \circ 1) & = & 0 \circ 01,\\
        &\psi(00 \circ 0) & = & 0 \circ 11,\\
  \end{array}
\]

Note that $\psi$ maps the upper half of $\cc$ to points whose first coordinate is $1$,
and maps the lower half of $\cc$ to points whose first coordinate is $0$.
It should be mentioned that the mapping $\psi$ restricted to the upper half of $\cc$
is exactly the mapping used in~\cite{BCS-bilip}, where it was shown that
the restriction of $\psi$ to the upper half of $\cc$ is a bi-Lipschitz bijection.
Similarly, the restriction of $\psi$ to the lower half of $\cc$
is also a bi-Lipschitz bijection, and so, as mentioned above,
the main difficulty in this construction was to ``stitch'' these two bijections so that
the endpoints of the edges in the middle layer are mapped by $\psi$ close to each other.

We next show that the mapping $\psi$ is indeed
a $11$-Lipschitz bijection from $\Maj$ to $\Dict$.
Note that by the triangle inequality it is enough to show that for
all edges $(x,y=x+e_i) \in \cc$ it holds that $\dist(\psi(x),\dist(y)) \leq 11$.

Write $x = x' \circ x_n$ where $x' \in \cc[n-1]$
represents the first $n-1$ bits of $x$, and $x_n \in \{0,1\}$ is the last bit of $x$.
Analogously write $y = y' \circ y_n$.
Let $C_{x'} = \{c_k,c_{k+1}\dots,c_{n-k} \}$ and
$C_{y'}  = \{c'_{k'},c_{k'+1}\dots,c'_{n-k'} \}$
be the BTK chains that contain $x' = c_j$ and $y' = c'_{j'}$ respectively,
where $j = |x'|$ and $j' = |y'|$.
Recall that $|j-j'| \leq 1$ since $|x'| = j$ and $|y'| = j'$ and $x \sim y$.

Our goal is to show that $\dist( \psi(x), \psi(y) ) \leq 10$.
We now consider several cases.

\paragraph{Case 1  ($\boldsymbol{|x| = (n-1)/2}$ and $\boldsymbol{|y| = (n+1)/2}$).}
In this case we have
\[
	\psi(x) = 0 \circ c_{(n+k)-(n-1)-1-x_n} = 0 \circ c_{k-x_n}
\]
and
\[
	\psi(y) = 1 \circ c_{n+1 - (n-k') + y_n} =  1 \circ c'_{k'+1+y_n}.
\]
By Lemma~\ref{lemma:BTK dist} we have $|k-k'| \leq 1$ and hence
$\dist(\psi(x),\psi(y)) \leq 1 + |(k-x_n) - (k'+1+y_n)| + 6 \leq 11$,
as required.

\paragraph{Case 2 ($\boldsymbol{|x| \geq (n+1)/2}$ and $\boldsymbol{|y| \geq (n+1)/2}$).}
In this case we have
$|(2j-(n-k)+x_n) - (2j'-(n-k')+y_n) | \leq 5$,
and so by Lemma~\ref{lemma:BTK dist} the distance between
$\psi(x)$ and $\psi(y)$ is at most $\dist(\psi(x),\psi(y))  = \dist(c_{2j-n+k+x_n}, c'_{2j'-n+k'+y_n})\leq 5 + 6 = 11$.

\paragraph{Case 3  ($\boldsymbol{|x| \leq (n-1)/2}$ and $\boldsymbol{|y| \leq (n-1)/2}$).}
This is handled similarly to case 2.

\medskip
This completes the proof of Theorem~\ref{thm:maj-dict}.
\end{proof}

\section{A Linear Bijection from $\Dict $ to $\Parity$}\label{sec:dict-to-xor}

In this section we prove Theorem~\ref{thm:dict-xor}.

\begin{proof}[Proof of Theorem~\ref{thm:dict-xor}]
We give an explicit mapping from $\Dict$ to $\Parity$ that is a linear transformation over $GF(2)$. Let $A$ be the matrix representing the linear transformation. We first show that the mapping satisfies the conditions in the lemma if $A$ has the following properties.

\begin{enumerate}
\item $A$ is invertible.

\item The first column of $A$ has odd weight, and all other columns have even weight.

\item All rows of $A$ have weight $3$.

\item All columns of $A$ have weight $2$.

\item All columns of $A^{-1}$ have weight $O(\log(n))$.

\end{enumerate}

The first condition implies that $A$ is a bijection. The second condition implies that $A$ maps from $\Dict$ to $\Parity$. To see this, consider $\Parity (Av)$ for any vector $v \in \{0, 1\}^n$. This is the sum over $GF(2)$ of the weights of all columns $j$ for which $v_{j} = 1$. Because the weights of all columns but the first are $0$, they can be ignored, and therefore $\Parity(Av) = 1$ if and only if $v_1 = 1$.

The third condition implies that each output bit is local, as the $i$th output bit depends only on the $i$th row of $A$.
The fourth condition implies that $A$ is $2$-Lipschitz. To see this, note that $\dist(x, y)$ is the weight of $x-y$, and $\dist(Ax, Ay)$ is the weight of $A(x-y)$. If the weight of each column of $A$ is at most $C$, then the weight of $A(x-y)$ is at most $C$ times the weight of $(x-y)$.
The same argument applied to $A^{-1}$ implies that $A^{-1}$ is $O(\log(n))$-Lipschitz.

Note that under the assumption that the mapping is a linear transformation, the above is necessary for a mapping to satisfy the conditions of the lemma.

We now construct the mapping $A$. Let $G$ be the complete binary tree on $n$ vertices,
with directed edges so that each points to the child. We uniquely label each vertex with a label from $1$ to $n$, with the root labeled $1$.
Let $A$ be $I_n+M$ where $M$ is the adjacency matrix of $G$. That is $A_{i,j} = 1$ if $i$ is the parent of $j$ in $G$ or $i=j$, and $A_{i,j} = 0$ otherwise.

We claim that the inverse of $A$ is the matrix defined as $B_{i,j} = 1$ if $j$ is a  descendant of $i$, and $B_{i,j} = 0$ otherwise.
Indeed, consider the $(i, j)$th entry of the product $A \cdot B$.
Then
\begin{eqnarray*}
	(A \cdot B)_{i,j} & = & \sum_{k=1}^n A_{i,k} B_{k,j} \\
	 & = & | \{ k : \text{$i=k$ or $i$ is the parent of $k$} \} \cap \{ k : \text{$k=j$ or $j$ is a descendant of $k$}\} |
\end{eqnarray*}

If $i = j$, these sets have exactly one element in common, $i$, and therefore $(A \cdot B)_{i, j} = 1$.
If vertex $j$ is not in the subtree rooted at vertex $i$, these sets have no vertices in common, and therefore $(A \cdot B)_{i, j} = 0$.
Finally, if vertex $j$ is in the subtree rooted at $i$ but is not $i$, these sets have exactly two vertices in common, and therefore $(A \cdot B)_{i, j} =0$.

Because vertex $1$ is the only vertex without a parent, the first column of $A$ has weight $1$.
All other vertices have exactly one parent, and hence all other columns of $A$ have weight $2$.
The weight of each row of $A$ is at most $3$ since this is equal to the number of children of the corresponding vertex plus 1.
The $i$th column of $A^{-1}$ is the indicator vector of the set of ancestors of $i$ in $G$ including $i$ itself.
The size of this set is bounded above by $\log(n)+1$. Therefore, $A$ satisfies the conditions of the lemma.
\end{proof}

Note that the above proof can be generalized to obtain a mapping $\phi$ that is
$L$-local $2$-Lipschitz such that $\phi^{-1}$ is $C$-Lipschitz,
for any $L$ and $C$ that satisfy $(L-1)^C \geq n$ by replacing the tree $G$ in the proof
with a complete $L-1$-ary tree on $n$ vertices. Such a tree will have height less than $C$.

\section{Any Bijection from $\Parity$ to $\Maj$ has Large Average Stretch}\label{sec:xor-to-maj}

In this section we prove Theorem~\ref{thm:xor-maj}.
We start with the second part of the theorem.

\begin{proposition}
    There exists a $C$-Lipschitz bijection $\psi \colon \cc \to \cc$ from $\Maj$ to $\Parity$
    for some absolute constant $C$.
\end{proposition}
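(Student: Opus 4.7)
The plan is to obtain $\psi$ by simply composing the two Lipschitz bijections already constructed earlier in the paper, rather than building a new mapping from scratch. Let $\psi_1 \colon \cc \to \cc$ be the $C_1$-Lipschitz bijection from $\Maj$ to $\Dict$ provided by Theorem~\ref{thm:maj-dict} (which applies since $n$ is odd whenever $\Maj$ is defined unambiguously), and let $\psi_2 \colon \cc \to \cc$ be the $2$-Lipschitz bijection from $\Dict$ to $\Parity$ provided by Theorem~\ref{thm:dict-xor}. I would define $\psi := \psi_2 \circ \psi_1$ and then verify the two required properties separately.

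First I would check that $\psi$ is a mapping from $\Maj$ to $\Parity$ in the sense of the paper. Bijectivity is immediate because $\psi$ is the composition of two bijections. For the function-preservation property, for every $x \in \cc$ I would chain the two defining identities of $\psi_1$ and $\psi_2$:
\[
\Parity(\psi(x)) = \Parity(\psi_2(\psi_1(x))) = \Dict(\psi_1(x)) = \Maj(x),
\]
as required.

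Next I would bound the Lipschitz constant. Since it suffices to verify the Lipschitz condition on edges of the hypercube, fix any edge $(x, x+e_i)$. Using that $\psi_1$ is $C_1$-Lipschitz and $\psi_2$ is $2$-Lipschitz,
\[
\dist(\psi(x), \psi(x+e_i)) \leq 2 \cdot \dist(\psi_1(x), \psi_1(x+e_i)) \leq 2 C_1.
\]
By the triangle inequality applied along an arbitrary shortest path in $\cc$, this edge bound upgrades to $\dist(\psi(x),\psi(y)) \leq 2 C_1 \cdot \dist(x,y)$ for all $x, y \in \cc$, so we may take $C = 2 C_1$, an absolute constant.

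There is no real obstacle here: the substantive content is entirely contained in Theorems~\ref{thm:maj-dict} and~\ref{thm:dict-xor}, and the proposition follows from the general (and easy) observation that Lipschitz mappings between boolean functions compose, with the function-preservation property chaining correctly and the Lipschitz constants multiplying. If one wanted a better constant one could try to build a direct bijection, but any such attempt would still essentially need to solve the hard combinatorial step already handled by Theorem~\ref{thm:maj-dict}.
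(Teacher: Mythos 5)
Your proposal is correct and takes essentially the same route as the paper: both prove the proposition by composing the $\Maj \to \Dict$ bijection of Theorem~\ref{thm:maj-dict} with a $2$-Lipschitz bijection from $\Dict$ to $\Parity$, and both note that the function-preservation identities chain while the Lipschitz constants multiply. The only cosmetic difference is that you invoke the more elaborate tree-based linear map of Theorem~\ref{thm:dict-xor} as the second factor, whereas the paper uses the simpler prefix-sum map $(x_1,\dots,x_n) \mapsto (x_1+x_2,\dots,x_{n-1}+x_n,x_n)$; either choice works, since only the forward Lipschitz bound is needed here.
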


\begin{proof}
	Take the $C$-Lipschitz bijection from $\Maj$ to $\Dict$ from
	Theorem~\ref{thm:maj-dict}, and compose it with the 2-Lipschitz
	bijection $\phi(x_1,x_2,\dots,x_n) = (x_1+x_2,x_2+x_3,\dots,x_{n-1}+x_n,x_n)$
	from $\Dict$ to $\Parity$.
	The resulting mapping is clearly a $2C$-Lipschitz bijection from $\Maj$ to $\Parity$.
\end{proof}

Next we prove the first part of Theorem~\ref{thm:xor-maj}
showing that any bijection from $\Parity$ to $\Maj$ must have large average stretch.
In fact we prove a stronger statement,
saying that in every direction $i \in [n]$ it holds that
the average stretch in the direction $e_i$ must be $\Omega(\sqrt{n})$.

\begin{proposition}
    Let $\phi$ be a bijection from $\Parity$ to $\Maj$, and let $i \in [n]$.
    Then $\E_{x}[\dist(\phi(x),\phi(x+e_i)] \geq c\sqrt{n}$ for some absolute constant $c>0$.
\end{proposition}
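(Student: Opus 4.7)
The plan is to exploit the fact that for any direction $i\in[n]$, the bijection $\phi$ induces a perfect matching between the two level sets $\Maj^{-1}(0)$ and $\Maj^{-1}(1)$, and then to lower bound the Hamming distance of any such matching by the difference in Hamming weights.

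\textbf{Step 1 (matching).} Fix $i \in [n]$. For every $x \in \cc$ the pair $\{x, x+e_i\}$ contains exactly one element of $\Parity^{-1}(0)$ and one element of $\Parity^{-1}(1)$. Since $\phi$ maps $\Parity^{-1}(b)$ bijectively onto $\Maj^{-1}(b)$ for $b \in \{0,1\}$, the pair $\{\phi(x),\phi(x+e_i)\}$ contains exactly one point $a \in \Maj^{-1}(0)$ and one point $b \in \Maj^{-1}(1)$. Letting $x$ range over $\cc$ produces each such pair twice, giving a perfect matching $M_i$ between $\Maj^{-1}(0)$ and $\Maj^{-1}(1)$, and
\[
    \E_x\big[\dist(\phi(x),\phi(x+e_i))\big] \;=\; \frac{1}{2^{n-1}} \sum_{(a,b)\in M_i} \dist(a,b),
\]
where the sum is over matched pairs with $a \in \Maj^{-1}(0)$, $b \in \Maj^{-1}(1)$.

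\textbf{Step 2 (pass to weights).} I use the elementary bound $\dist(a,b) \geq |{\rm wt}(a)-{\rm wt}(b)| = {\rm wt}(b) - {\rm wt}(a)$, since $b \in \Maj^{-1}(1)$ has weight $> n/2$ and $a \in \Maj^{-1}(0)$ has weight $\le n/2$. Summing over $M_i$, the right-hand side telescopes into a matching-independent quantity:
\[
    \sum_{(a,b)\in M_i} ({\rm wt}(b)-{\rm wt}(a))
    \;=\; \sum_{b \in \Maj^{-1}(1)} {\rm wt}(b) - \sum_{a \in \Maj^{-1}(0)} {\rm wt}(a).
\]
Since $|\Maj^{-1}(0)| = |\Maj^{-1}(1)| = 2^{n-1}$ for odd $n$, subtracting $\frac{n}{2}\cdot 2^{n-1}$ from each sum recenters things and gives $\sum_{y \in \cc}|{\rm wt}(y)-n/2| = 2^n \cdot \Ex[|S_n-n/2|]$, where $S_n \sim \text{Bin}(n,1/2)$.

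\textbf{Step 3 (binomial deviation).} It is standard (by direct computation with central binomial coefficients, or by the CLT together with a uniform-integrability argument) that $\Ex[|S_n-n/2|] \geq c'\sqrt{n}$ for some absolute $c'>0$. Combining the three steps,
\[
    \Ex_x\big[\dist(\phi(x),\phi(x+e_i))\big] \;\geq\; \frac{1}{2^{n-1}}\cdot 2^n \cdot \Ex[|S_n-n/2|] \;=\; 2\,\Ex[|S_n-n/2|] \;\geq\; c\sqrt{n}.
\]

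The only non-formal step is the lower bound on $\Ex[|S_n-n/2|]$, and this is a classical fact rather than a genuine obstacle; everything else is a one-line observation. The main conceptual point is the telescoping in Step~2, which makes the bound hold for \emph{every} matching $M_i$ and hence is insensitive to the actual structure of $\phi$.
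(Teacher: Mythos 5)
Your proof is correct, and it is a genuinely different argument from the one in the paper. The paper proceeds by concentration: since each Hamming level has at most $\binom{n}{\lfloor n/2\rfloor} < 2^n/\sqrt{n}$ points, a $0.9$ fraction of $z\in\cc$ have $|\mathrm{wt}(z)-n/2|>0.01\sqrt{n}$ (``typical'' points), so for most edges $(x,x+e_i)$ at least one endpoint is mapped to a typical point; since $\phi$ sends the two endpoints to opposite sides of level $n/2$, such edges have stretch $\Omega(\sqrt n)$, and the $0.8$ fraction of good edges already forces the claimed average. Your argument instead telescopes: after bounding $\dist(a,b)\ge\mathrm{wt}(b)-\mathrm{wt}(a)$ over the induced matching between $\Maj^{-1}(0)$ and $\Maj^{-1}(1)$, the sum collapses to the matching-independent quantity $\sum_{y\in\cc}|\mathrm{wt}(y)-n/2| = 2^n\,\Ex[|S_n - n/2|]$, yielding the clean identity $\Ex_x[\dist(\phi(x),\phi(x+e_i))]\ge 2\,\Ex[|S_n-n/2|]\sim\sqrt{2n/\pi}$. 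What each buys: the paper's route additionally shows a \emph{constant fraction} of $e_i$-edges individually have stretch $\Omega(\sqrt n)$, which is slightly more than a bound on the mean; your route gives a sharper and essentially tight numerical constant (equality is attained exactly when every matched pair is comparable in the coordinatewise order, so the bound is achieved, e.g.\ at $n=3$) and avoids any case analysis or choice of thresholds. Both are valid proofs of the stated proposition.
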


\begin{proof}
	Fix $i \in [n]$ and consider all edges in the direction $e_i$, i.e.,
	the edges of the form $\{ (x,x+e_i) \}_{x \in \cc}$.
	Since $\Parity(x) \neq \Parity(x+e_i)$ for all $x \in \cc$
	it follows that for every such edge one of its endpoints must be mapped
	to the upper half of the hypercube, and the other endpoint to the
	bottom half of the hypercube.
	On the other hand, since each level of the hypercube contains at most ${n \choose n/2}<2^n/\sqrt{n}$ vertices,
	it follows that
	for $0.9$-fraction of points $z \in \cc$
	their weight differs from $n/2$ by more than $0.01\sqrt{n}$.
	Let us call such $z$ \emph{typical}.
	Therefore, for $0.8$ of inputs $x$ it holds that at least one of the
	endpoints of the edge $(x,x+e_i)$ is mapped to a typical point.
	Let us say for concreteness that $x$ is typical and is
	mapped above level $n/2 + 0.01\sqrt{n}$. Then, since $\phi(x) \neq \phi(x + e_i)$
	it follows that $\phi(x+e_i)$ must belong to the lower half of the hypercube,
	and thus $\dist(\phi(x),\phi(x+e_i)) \geq \Omega(\sqrt{n})$.
	Therefore, for at least $0.8$ fraction of the edges in the direction $e_i$
	it holds that $\dist(\phi(x),\phi(x+e_i)) \geq 0.02\sqrt{n}$.
\end{proof}

This clearly implies Theorem~\ref{thm:xor-maj}.

\section{Bijection from $\Dict$ to a Random Function}\label{sec:dict-to-rand}

In this section we prove Theorem~\ref{thm:rand}.
We start with the following claim.

\begin{lemma}\label{lemma:rand HLN}
Let $A \seq \{0,1\}^n$ be a random set chosen by picking each $x \in \cc$
to be in $A$ independently with probability $0.5$.
Then with probability $1-2^{-2^{\Omega(n)}}$ there exists an injective mapping $\phi_A \colon \cc[n-1] \to \cc$
such that the following holds.

\begin{enumerate}
\item For all $x \in \cc[n-1]$ it holds that $\dist(x,\phi_A(x)_{[1,\dots,n-1]}) \leq 1$,
where $\phi_A(x)_{[1,\dots,n-1]}$ denotes the restriction of $\phi_A(x)$ to the first $n-1$ coordinates.
\item $\Pr_{x \in \cc[n-1]}[\phi_A(x) \in A] = 1-O(1/n)$.
\end{enumerate}

\end{lemma}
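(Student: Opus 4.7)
I would construct $\phi_A$ in two stages using a random classification of $\cc[n-1]$ induced by $A$. For each $x \in \cc[n-1]$ consider the pair of slots $P_x = \{(x,0),(x,1)\} \seq \cc$. The pairs $\{P_x\}_{x \in \cc[n-1]}$ partition $\cc$, so the three events ``$|P_x \cap A|=0$'' (\emph{bad}), ``$|P_x \cap A|=1$'' (\emph{good}), and ``$|P_x \cap A|=2$'' (\emph{spare}) occur with probabilities $1/4, 1/2, 1/4$ and are mutually independent across different $x$. Denote the bad and spare subsets by $B, S \seq \cc[n-1]$. By Chernoff, $|B|, |S| = 2^{n-1}/4 \pm O(2^{n/2})$ with probability $1 - 2^{-2^{\Omega(n)}}$.

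\textbf{Construction.} In Stage~1, for each $x \notin B$ I set $\phi_A(x) \in P_x \cap A$; when $x \in S$ I fix the convention $\phi_A(x) = (x,0)$, leaving $(x,1) \in A$ as a \emph{reserve slot}. This handles $3/4$ of $\cc[n-1]$ with $\dist(x, \phi_A(x)_{[1,\dots,n-1]}) = 0$ and $\phi_A(x) \in A$. In Stage~2, for each $x \in B$ I try to set $\phi_A(x) = (y,1)$ for some $y \in S$ with $\dist(x,y) = 1$: this reduces to finding a large matching in the bipartite graph $H$ with parts $B, S$ and edge $\{x,y\}$ whenever $\dist(x,y) = 1$. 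Any $x \in B$ left unmatched receives $\phi_A(x) = (x,0)$; this slot is outside $A$ (a condition-(2) failure) but is not claimed by anyone else and still satisfies $\dist(x, \phi_A(x)_{[1,\dots,n-1]}) = 0$. The claim thus reduces to showing that $H$ admits a matching missing at most $D := O(2^n/n)$ vertices of $B$, with probability $1 - 2^{-2^{\Omega(n)}}$.

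\textbf{Defect Hall.} By the defect version of Hall's theorem, it suffices to show that with the required probability no $T \seq B$ has $|T| - |N_H(T)| > D$. For a fixed $T \seq \cc[n-1]$, the event $T \seq B$ has probability $4^{-|T|}$; conditioning on it, each $y \in T^{(1)} \setminus T$ lies in $S$ independently with probability $1/4$ (the bits determining $y$'s class are disjoint from those fixed by the conditioning), so $|N_H(T)| \sim \mathrm{Bin}(|T^{(1)} \setminus T|, 1/4)$, where $T^{(1)}$ denotes the $1$-Hamming-neighborhood of $T$ in $\cc[n-1]$. I would split by $|T|$. \emph{Small $T$:} when the vertex-isoperimetric inequality in the cube yields $|T^{(1)}\setminus T| \geq (4+\Omega(1))|T|$, the conditional mean of $|N_H(T)|$ exceeds $|T|$ by a linear-in-$|T|$ margin, and Hoeffding combined with $4^{-|T|}$ and $\binom{2^{n-1}}{|T|}$ delivers a doubly-exponentially small union bound. \emph{$T$ close to $B$:} if $|B \setminus T| \leq D/(2n)$, then $|N_H(T)| \geq |N_H(B)| - n\,|B \setminus T|$, and the estimate $|N_H(B)| \geq |S| - O(|S|(3/4)^n)$ (almost every spare has some bad Hamming neighbor) together with the $|B|,|S|$-concentration gives $|N_H(T)| \geq |T| - D$ deterministically on the good event.

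\textbf{Main obstacle.} The delicate regime is intermediate $|T|$ — a constant fraction of $|B|$ that is not yet close to $B$ and whose vertex boundary barely exceeds $|T|$. Hoeffding is too weak here to survive a naive union bound over the doubly-exponentially many subsets, so I would invoke Harper's vertex-isoperimetric theorem to argue that any would-be obstruction must be close in symmetric difference to a Hamming ball or subcube, of which there are only $\mathrm{poly}(n) \cdot 2^n$ canonical realizations. For each such realization, $|T| = \Omega(2^n)$ makes the conditioning penalty $4^{-|T|} = 2^{-\Omega(2^n)}$ dominate the polynomial count. Executing this structural reduction cleanly — and simultaneously absorbing the fluctuations of $|N_H(T)|$ around its conditional mean via Bernstein-type concentration — is the main technical step of the proof.
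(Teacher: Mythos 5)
Your reduction is the same as the paper's: classify each $x\in\cc[n-1]$ by $|P_x\cap A|$ (your ``bad''/``spare'' = the paper's ``poor''/``rich''), handle the easy cases directly, and reduce the lemma to finding a matching in the bipartite Hamming graph between bad and spare vertices that misses only $O(2^n/n)$ bad vertices. Where you diverge is in \emph{how} to produce that matching, and this is where the argument breaks.

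You propose the defect form of Hall's theorem and a union bound over all $T\seq\cc[n-1]$, correctly noting the problem lies in the ``intermediate'' regime. However, the obstacle is worse than your proposal acknowledges. The Hall condition is non-trivial only for $|T|>D=\Theta(2^n/n)$, and for \emph{every} $m$ in that range a Hamming ball of size $m$ already has vertex boundary $\Theta(m/\sqrt{n})\ll 4m$; so the balls of the relevant sizes are \emph{not} near-extremal within the class ``boundary $\leq 4m$'', they are smaller by a factor of $\sqrt{n}$. Consequently any stability version of Harper's theorem (which controls sets whose boundary is within a $(1+o(1))$ factor of the minimum) says nothing about the class you need to enumerate, namely all $T$ with $|\partial T|\leq 4m$ when the minimum is $\Theta(m/\sqrt n)$. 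Concretely, the family $\{T\,:\,|T|=m,\ |\partial T|\leq 4m\}$ is enormous and has no known structural description of the kind your argument requires; and a crude encoding of such $T$ by their (edge) boundary costs a factor $2^{\Theta(nm)}$, which swamps the $4^{-m}$ conditioning penalty. Your ``small $T$'' case is also vacuous here: the ball boundary exceeds $(4+\Omega(1))|T|$ only for $|T|\lesssim 2^{cn}$ with $c<1$, which is far below the threshold $D=\Theta(2^n/n)$ at which the Hall condition becomes non-trivial. So the middle of your argument is not a technical inconvenience that Bernstein-type concentration can absorb; it is an unfilled hole, and I do not see a way to fill it along these lines.

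The paper avoids the union-bound issue entirely by using the algorithmic matching of Hajnal, Maass and Tur\'an (cited as~\cite{HLN87}): greedily match poor to rich vertices along directions $e_1,\dots,e_{n/2}$ in order. The two facts that make the analysis go through are (i) a locality invariant --- after processing directions $1,\dots,k$, whether $x$ is still poor-and-unmatched depends only on $A$ restricted to vertices agreeing with $x$ on coordinates $k+1,\dots,n-1$, so the statuses are independent across the $2^{n/2-1}$ ``slices'' obtained by fixing the first $n/2$ coordinates --- and (ii) the scalar recursion $p_{i+1}=p_i(1-p_i)$ for the unmatched-poor probability, which yields $p_{n/2}=O(1/n)$ by the elementary induction $p_i\leq 1/i$. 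Independence within each slice plus Chernoff then gives concentration to the $1-2^{-2^{\Omega(n)}}$ level, and a union bound over the $2^{n/2-1}$ slices (polynomially many, in the doubly-exponential scale) finishes the proof. This constructive, coordinate-by-coordinate argument is what replaces the global Hall-type union bound you were attempting, and I'd recommend rewriting the matching step around it.
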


We postpone the proof until later and show how to prove Theorem~\ref{thm:rand}
using Lemma~\ref{lemma:rand HLN}.

\begin{proof}[Proof of Theorem~\ref{thm:rand}]
    We start with the following simple claim, which is
    immediate from Lemma~\ref{lemma:rand HLN}.
    \begin{claim}\label{claim:rand bal set}
        Let $A \subset \cc$ be a uniformly random subset of size $2^{n-1}$.
        Then, with probability $1-2^{-2^{\Omega(n)}}$
        there exists a bijection $\phi \colon \cc[n-1] \to A$
        such that for $1 - O(1/n)$ fraction of the inputs it holds that
        $\dist(x,\phi(x)_{[1,\dots,n-1]}) \leq 1$.
    \end{claim}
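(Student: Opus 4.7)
The plan is to deduce Claim~\ref{claim:rand bal set} from Lemma~\ref{lemma:rand HLN} by (i) translating the conclusion from a Bernoulli-$1/2$ random set to a uniformly random balanced set via a simple conditioning argument, and then (ii) extending the injection supplied by Lemma~\ref{lemma:rand HLN} to an honest bijection onto $A$ on the $O(1/n)$ fraction of bad inputs, where we no longer care about the distance condition.

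First I would carry out the coupling step. Let $B \seq \cc$ be a set sampled by including each point independently with probability $1/2$. By Stirling's formula, $\Pr[|B| = 2^{n-1}] = \binom{2^n}{2^{n-1}} / 2^{2^n} = \Theta(2^{-n/2})$, and conditioned on this event the set $B$ is distributed uniformly among all balanced subsets of $\cc$. Lemma~\ref{lemma:rand HLN} states that the good event (existence of the required injective $\phi_B$) fails with probability at most $2^{-2^{\Omega(n)}}$ over the Bernoulli distribution. Hence the conditional failure probability is at most
\[
    \frac{2^{-2^{\Omega(n)}}}{\Theta(2^{-n/2})} \;=\; 2^{-2^{\Omega(n)}},
\]
since the doubly-exponential term dominates the $2^{n/2}$ factor. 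Thus with probability $1-2^{-2^{\Omega(n)}}$ over a uniformly random balanced $A$, Lemma~\ref{lemma:rand HLN} supplies an injective mapping $\phi_A \colon \cc[n-1] \to \cc$ satisfying both items in its statement.

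Next I would upgrade this injection into a bijection onto $A$ while preserving the distance condition on a $1-O(1/n)$ fraction of inputs. Let
\[
    S \;=\; \{\, x \in \cc[n-1] \,:\, \phi_A(x) \in A \,\},
\]
so that $|S| \geq (1 - O(1/n))\cdot 2^{n-1}$ by item~2 of Lemma~\ref{lemma:rand HLN}. Because $|A| = 2^{n-1} = |\cc[n-1]|$ and $\phi_A|_S$ is an injection from $S$ into $A$, the complements $\cc[n-1] \setminus S$ and $A \setminus \phi_A(S)$ have the same cardinality. Choose any bijection $\rho$ between them, and define
\[
    \phi(x) \;=\; \begin{cases} \phi_A(x) & \text{if } x \in S, \\ \rho(x) & \text{if } x \notin S. \end{cases}
\]
By construction $\phi$ is a bijection from $\cc[n-1]$ onto $A$, and $\dist(x, \phi(x)_{[1,\dots,n-1]}) \leq 1$ for all $x \in S$, i.e., on a $1-O(1/n)$ fraction of inputs, as required.

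The potential obstacle is the conditioning step: one might worry that the ``good event'' provided by Lemma~\ref{lemma:rand HLN} is correlated with $|B|$ in a way that amplifies the failure probability upon conditioning. However, this is not really an issue because the $2^{-2^{\Omega(n)}}$ bound on the unconditional failure probability is so strong that it absorbs the $2^{-n/2}$ loss from conditioning on a mere subexponentially unlikely event. The remaining bijection-completion step is essentially free and introduces no additional probability loss, completing the proof of Claim~\ref{claim:rand bal set}.
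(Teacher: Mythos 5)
Your proof is correct, and the second half (upgrading the injection from Lemma~\ref{lemma:rand HLN} to a bijection onto $A$ by an arbitrary matching on the $O(1/n)$ fraction of bad inputs) is essentially the same modification step the paper carries out. The interesting difference is in how you transfer the Bernoulli-$1/2$ guarantee of Lemma~\ref{lemma:rand HLN} to the uniform-balanced-set distribution. The paper's proof constructs the balanced set $A_1$ by first sampling a Bernoulli-$1/2$ set $A$ and then adding or removing uniformly random elements one by one until $|A_1|=2^{n-1}$; it then needs an extra Chernoff-bound step to argue that $|A \triangle A_1| \leq 2^n/n$ with probability $1-2^{-2^{\Omega(n)}}$, which feeds into the count of inputs $x$ with $\phi(x) \notin A_1$. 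Your route instead conditions the Bernoulli set on the event $|B| = 2^{n-1}$ and pays a $\Theta(2^{n/2})$ multiplicative factor in the failure probability, which is negligible against the doubly exponential bound. This is a genuinely shorter argument: it eliminates the auxiliary Chernoff bound and the bookkeeping for the symmetric difference, at the modest cost of the observation that $\Pr[\,\cdot \mid |B|=2^{n-1}\,] \leq \Pr[\,\cdot\,]/\Pr[|B|=2^{n-1}]$, which is exactly what you note in your final paragraph. Both approaches are correct; yours is arguably cleaner.
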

    \begin{proof}
        Let us sample a subset $A_1 \subset \cc$ of
        size exactly $2^{n-1}$ in the following manner.
        Pick a random subset $A \seq \{0,1\}^n$ by choosing each $x \in \cc$
        to be in $A$ independently with probability $0.5$.
        Then, if $A < 2^{n-1}$ we add to $A$ uniformly random elements
        from $\cc \setminus A$ one by one until the size of $A$
        becomes $2^{n-1}$.
        Similarly, if $A > 2^{n-1}$ we remove random elements
        from $A$ one by one until the size of $A$ becomes $2^{n-1}$.
        Let $A_1$ be the obtained set.
        Clearly $A_1$ is indeed a uniformly random subset of $\cc$ of size $2^{n-1}$.

    	By Lemma~\ref{lemma:rand HLN}, with probability $1-2^{-2^{\Omega(n)}}$
        there is an injective mapping $\phi \colon \cc[n-1] \to A$
        such that $\dist(x,\phi(x)_{[1,\dots,n-1]}) \leq 1$ for all $x \in \cc[n-1]$,
        and for all but $O(1/n)$ fraction of the inputs it holds that $\phi(x) \in A_1$.
        By the Chernoff bound with probability $1-2^{-2^{\Omega(n)}}$
        the sets $A$ and $A_1$ differ in at most most $2^n/n$ elements.
        Therefore, we can modify $\phi$ in $O(1/n)$ fraction of the
        inputs so that the obtained mapping is a bijection from $\cc[n-1]$ to $A$
        that satisfies the requirements of the claim.

    \end{proof}

    In order to prove Theorem~\ref{thm:rand} we sample a uniformly random balanced
    boolean function $f \colon \{0,1\}^n \to \{0,1\}$ as follows.
    Pick $A_1 \subset \cc$ of size $2^{n-1}$ uniformly at random,
    and let $A_0 = \cc \setminus A_1$,
    Define $f$ to be the indicator function of $A_1$,
    i.e., $A_1 = f^{-1}(1)$ and $A_0 = f^{-1}(0)$.

	Since $A_1$ is a uniformly random set of size $2^{n-1}$
	by Claim~\ref{claim:rand bal set} with probability $1-2^{-2^{\Omega(n)}}$
	there is a bijection $\phi_1 \colon \{ x \in \cc[n-1] \colon x_1 = 1 \} \to A_1$
	such that $\dist(x,\phi_1(x)) \leq 2$ for all but $O(1/n)$ fraction of the domain of $\phi_1$.
	Similarly, $A_0$ is also a uniformly random subset of $\cc$ of size $2^{n-1}$,
	and hence with probability $1-2^{-2^{\Omega(n)}}$
	there is a bijection $\phi_0 \colon \{ x \in \cc[n-1] \colon x_1 = 0 \} \to A_0$
	such that $\dist(x,\phi_0(x)) \leq 2$ for all but $O(1/n)$ fraction of the inputs
	
	By the union bound with probability $1-2^{-2^{\Omega(n)}}$
	both $\phi_0$ and $\phi_1$ exist.
	Since $\phi_0$ and $\phi_1$ are defined on disjoint domains, whose union is the entire hypercube,
	we can define $\phi \colon \cc \to \cc$ to be
	\[
		\phi(x) =
			\begin{cases}
				\phi_1(x)	& \text{if } x_1 = 1 \\
				\phi_0(x)	& \text{if } x_1 = 0.
			\end{cases}
	\]
	Clearly $\phi$ is a bijection from $\Dict$ to $f$ and it satisfies
	\begin{equation}\label{eq:rand dist2}
		\Pr_{x \in \cc}[\dist(x,\phi(x)) \leq 2] = 1 - O(1/n),
	\end{equation}
	as required. The ``in particular'' part of Theorem~\ref{thm:rand}
	follows immediately from~\eqref{eq:rand dist2}.
	Indeed, by~\eqref{eq:rand dist2} it follows that
	$1 - O(1/n)$ of the edges $(x,x+e_i)$ satisfy
	\[
		\dist(\phi(x),\phi(x+e_i)) \leq \dist(\phi(x),x) + \dist(x, x+e_i) + \dist(x + e_i,\phi(x+e_i)) \leq 5,
	\]
	and therefore
	\[
		\avgstretch(\phi) =\E_{x \in \cc, i \in n}[\dist(\phi(x),\phi(x+e_i)] \leq 5 \cdot (1 - O(1/n)) + n \cdot O(1/n) = O(1).
	\]
	In order to see that $\avgstretch(\phi^{-1}) = O(1)$ note that
	$\phi$ is a bijection, and so by~\eqref{eq:rand dist2}
	we have $\Pr_{x \in \cc}[\dist(x,\phi^{-1}(x)) \leq 2] = 1 - O(1/n)$.
	This completes the proof of Theorem~\ref{thm:rand}.
\end{proof}
	
We now return to the proof of Lemma~\ref{lemma:rand HLN}.
The proof relies on an algorithm from~\cite{HLN87}.

\begin{proof}[Proof of Lemma~\ref{lemma:rand HLN}]
In order to describe the algorithm let $A$ be a random subset of $\cc$. For each
$x \in \cc[n-1]$ say that $x$ is \emph{rich} if both $x \circ 0$ and $x \circ 1$
belong to $A$, and say that $x$ is \emph{poor} if none of $x \circ 0$ and $x \circ 1$ belongs to $A$.
If there were no poor vertices in $\cc[n-1]$
then we could define $\phi_A$ by extending its input $x$
to either $x \circ 0$ or $x \circ 1$.
However, since the subset $A$ is uniformly random, roughly $1/4$ fraction
of the vertices in $\cc[n-1]$ will be poor, and we will match all but $O(1/n)$ fraction
of poor vertices with a neighboring rich vertex. Then, we will define a mapping $\phi_A$
in the following way:
(1) if $x$ is neither rich nor poor, then define
$\phi_A(x) = x \circ b$, where $b \in \{0,1\}$ is such that $x \circ b \in A$,
(2) if $x$ is rich, then define $\phi_A(x) = x \circ 1$,
(3) if $x$ is poor and is matched with a rich vertex $y$, then define $\phi_A(x) = y \circ 1$,
(4) otherwise, $x$ is poor and is not matched with a rich vertex, in which case we define $\phi_A(x) = x \circ 0$.

Clearly such a mapping $\phi_A$ satisfies the condition that $\dist(x,\phi_A(x)_{[1,\dots,n-1]}) \leq 1$
for all $x \in \cc[n-1]$.
We will define the matching so that only $O(1/n)$ fraction of the poor vertices will be of type (4),
i.e., will be poor and not matched to a neighboring rich vertex, and hence only those vertices $x$
will be so that $\phi_A(x) \notin A$.

The algorithm for finding such a matching is the following.

    \begin{algorithm}
    \algsetup{indent=2em}
    \caption{Matching poor vertices to rich vertices}\label{alg:HLN}
    \begin{algorithmic}[1]
            \FOR{$i =1 \dots, n/2$}
                \IF{$x$ is poor and not matched and $x+e_i$ is rich and not matched }
                    \STATE Match $x$ with $x+e_i$
                \ENDIF
            \ENDFOR
    \end{algorithmic}
    \end{algorithm}

\begin{remark}
	We remark that we could allow the loop to run until $n$, however for the analysis it will be more convenient to stop after $n/2$ steps.
\end{remark}

The following two claims from~\cite{HLN87} are the key steps in the analysis of the algorithm above.
\begin{claim}[{\cite[Lemma 1]{HLN87}}]\label{claim:indHLN}
	For every $k \leq n/2$, the status of $x$ in the $k$th iteration of the algorithm
	is independent of all vertices that differ from $x$ in some coordinate larger than $k$.

	In particular for any $z \in \cc[k]$ let $A_z = \{x \in \cc[n-1] \colon x_{[1,\dots,k]} = z\}$.
	Then, in the $k$th iteration of the algorithm the status of each vertex in $A_z$ is independent of the others.
\end{claim}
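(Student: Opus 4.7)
The plan is to prove a strengthened statement by induction on $k$, from which the claim follows immediately. For each $x \in \cc[n-1]$, define the $k$-neighborhood
\[
N_k(x) = \{y \in \cc[n-1] \colon y_{k+1} = x_{k+1}, \ldots, y_{n-1} = x_{n-1}\},
\]
i.e., the set of vertices whose suffix beyond coordinate $k$ agrees with that of $x$. The strengthened claim is: for every $0 \leq k \leq n/2$, the status of $x$ at the end of the $k$th iteration (whether $x$ is rich/poor/neither, and, if matched, its matching partner) is a function of the $2|N_k(x)|$ random bits $\{\mathbf{1}[y \circ 0 \in A], \mathbf{1}[y \circ 1 \in A] \colon y \in N_k(x)\}$. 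The proposition follows immediately: if $x' \in \cc[n-1]$ differs from $x$ in some coordinate larger than $k$, then $N_k(x) \cap N_k(x') = \emptyset$, so the statuses of $x$ and $x'$ depend on disjoint subsets of the independent coin flips defining $A$, hence are independent.

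For the base case $k = 0$ (before any iteration), the status of $x$ is just rich/poor/neither, determined by $A(x \circ 0)$ and $A(x \circ 1)$, indexed by $N_0(x) = \{x\}$. For the inductive step, note that iteration $k$ processes only edges in direction $e_k$, which partition $\cc[n-1]$ into disjoint pairs $\{x, x + e_k\}$; the matching decision for the pair containing $x$ therefore depends entirely on the statuses of $x$ and $x + e_k$ at the end of iteration $k-1$. By the inductive hypothesis these are functions of bits indexed by $N_{k-1}(x)$ and $N_{k-1}(x + e_k)$, respectively. Since $x$ and $x + e_k$ agree on coordinates $k+1, \ldots, n-1$, both of these sets are contained in $N_k(x)$, completing the induction.

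The ``in particular'' part is then immediate, since any two distinct vertices of $A_z$ share the prefix $z$ of length $k$ and therefore differ only in coordinates larger than $k$; applying the measurability statement to any sub-collection of vertices in $A_z$ yields mutual independence of their statuses, not merely pairwise independence. I expect the main subtlety to be identifying the correct inductive hypothesis, namely tracking the precise ``footprint'' $N_k(x)$ of random bits on which the status of $x$ depends; once this framework is in place, the inductive step reduces to the observation that iteration $k$ touches $x$ through at most one neighbor, namely $x + e_k$, and so the footprint grows by exactly one coordinate per step.
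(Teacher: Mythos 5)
Your proof is correct and follows the same underlying idea as the paper's: the status of $x$ after $k$ rounds is determined by a ``footprint'' of coordinates confined to the first $k$ positions, so vertices differing in a later coordinate have disjoint footprints over independent coin flips. The paper states this in two sentences without the explicit $N_k(x)$ bookkeeping; your induction on $k$ simply makes that same observation rigorous, including the (needed, and sometimes glossed over) upgrade from pairwise to mutual independence for the vertices in $A_z$.
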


\begin{proof}
	At each iteration $i$ the vertices that affect each other are matched according to the edges in the direction $e_i$.
	Therefore, any two vertices that differ in some coordinate larger than $k$
	had no interaction between them, and so are independent of each other.
\end{proof}

\begin{claim}\label{claim:recursion}
	For every $x \in \cc[n-1]$ and $i = 1,\dots, n/2$
	let	$p_i$ be the probability that $x$ is poor and unmatched after iteration $i$.
	Then
	\begin{enumerate}
	\item $p_{i+1} = p_i(1-p_i)$ for all $i < n/2$.
	\item $p_{n/2} < 2/n$.
	\end{enumerate}		
\end{claim}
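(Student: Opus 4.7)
My approach is to derive the recursion directly from the independence result of Claim~\ref{claim:indHLN} together with a symmetry between ``rich'' and ``poor'' vertices, and then to iterate the resulting logistic-style recursion in order to bound $p_{n/2}$. Note that by vertex-transitivity of the hypercube and the translation-invariance of the distribution of $A$, the quantity $p_i$ does not depend on the particular choice of $x$.

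For part~(1), fix $x \in \cc[n-1]$ and observe that $x$ remains poor and unmatched after iteration $i+1$ if and only if (a) $x$ is poor and unmatched after iteration $i$, and (b) $x + e_{i+1}$ is \emph{not} simultaneously rich and unmatched after iteration $i$ (otherwise iteration $i+1$ would pair them). Since $x$ and $x + e_{i+1}$ differ in coordinate $i+1 > i$, Claim~\ref{claim:indHLN} applied with $k = i$ yields that their statuses after iteration $i$ are independent. Moreover, replacing $A$ by its complement $\cc \setminus A$ swaps the labels ``rich'' and ``poor'' pointwise while leaving the set of pairs produced by Algorithm~\ref{alg:HLN} invariant and preserving the distribution of $A$; hence $\Pr[y \text{ rich and unmatched after iter } i] = \Pr[y \text{ poor and unmatched after iter } i] = p_i$ for every $y \in \cc[n-1]$. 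Combining these two facts gives $p_{i+1} = p_i(1 - p_i)$.

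For part~(2), extend the definition to $p_0 = 1/4$, the probability that both $x \circ 0$ and $x \circ 1$ lie outside $A$; this is consistent with the recursion, since $p_1 = p_0(1-p_0) = 3/16$ matches a direct computation. Setting $y_i = 1/p_i$, the recursion rearranges to
\[
y_{i+1} \;=\; \frac{1}{p_i(1-p_i)} \;=\; y_i + \frac{1}{1-p_i} \;\geq\; y_i + 1,
\]
since $p_i \in (0,1)$. Therefore $y_{n/2} \geq y_0 + n/2 = 4 + n/2 > n/2$, so $p_{n/2} < 2/n$.

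The main subtlety I expect is justifying the symmetry step: one must verify that complementing $A$ produces exactly the same set of matched pairs in Algorithm~\ref{alg:HLN}, with only the rich/poor labels on each pair swapped, so that the laws of ``rich and unmatched'' and ``poor and unmatched'' coincide pointwise. Given this, Claim~\ref{claim:indHLN} and the elementary manipulation of the logistic recursion complete the argument.
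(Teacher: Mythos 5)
Your proof is correct, and both halves take a cleaner route than the paper's. For part~(1), the paper also reduces to showing $q_i = p_i$, where $q_i$ is the probability of being rich and unmatched after round $i$, but it gets there computationally: it derives the coupled recursions $p_{i+1} = p_i(1-q_i)$ and $q_{i+1} = q_i(1-p_i)$, subtracts to conclude $q_{i+1}-p_{i+1} = q_i - p_i = q_0 - p_0 = 0$, and only then substitutes. You instead deduce $q_i = p_i$ directly from the complementation symmetry $A \leftrightarrow \cc \setminus A$, which swaps rich/poor pointwise, preserves the matching produced by Algorithm~\ref{alg:HLN} round by round, and preserves the law of $A$. This is a structural explanation of the identity that the paper obtains algebraically, and it avoids having to write down the recursion for $q_i$ at all (though it does require the one-line inductive check you flag, that complementation leaves the matched set invariant in every round). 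For part~(2), the paper proves $p_i \le 1/i$ by induction on $i$ with a small base-case check; your substitution $y_i = 1/p_i$ together with the partial-fraction identity $1/(p(1-p)) = 1/p + 1/(1-p)$ gives $y_{i+1} \ge y_i + 1$ and hence $y_{n/2} \ge 4 + n/2$, which is a one-line telescoping argument yielding a marginally sharper constant. Both approaches are elementary; yours is a bit more self-contained and avoids the separate verification of the induction base case.
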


\begin{proof}

Let $q_i$ be the probability that $x$ is rich but unmatched after round $i$, analogous to $p_i$.  In the $(i+1)$st round, $x$ is poor and unmatched if $x$ was poor and unmatched after the $i$th round, and $x+e_{i+1}$ is rich and unmatched after the $i$th round.  By Claim~\ref{claim:indHLN}, these two events are independent, and therefore $p_{i+1} = p_i(1-q_i).$  Similarly, $q_{i+1}$ can be expressed as $q_{i+1} = q_i(1-p_i)$.  Subtracting these two equations, we see that $q_{i+1}-p_{i+1} = q_i-p_i = q_0-p_0$ for all $i$.  This is natural as the difference between the number of rich unmatched vertices and poor unmatched vertices stays constant throughout the rounds.  Because $p_0 = 1/4$ and $q_0 = 1/4$, this difference is $0$ and therefore $q_i = p_i$ for all $i$.  Substituting $q_i$ in the expression for $p_{i+1}$ yields $p_{i+1} = p_i(1-p_i)$.  This proves the first part of the claim.

To prove the second part of the claim,
we show by induction that $p_i \leq 1/i$ for all $i \geq 1$.
Indeed, since $p_0 = 1/4$ the claim holds for $i \leq 4$.
For the induction step for $i \geq 4$, if $p_i < 1/i \leq 1/4$, then
$p_{i+1} = p_i(1 - p_i) \leq (1/i)(1 - 1/i) < 1/(i+1)$, as required.

\end{proof}

	We are now ready to complete the proof of Lemma~\ref{lemma:rand HLN}.
	For each $z \in \cc[n/2]$ consider the set $A_z = \{x \in \cc[n-1] \colon x_{[1,\dots,k]} = z\}$.
	By Claims~\ref{claim:indHLN} and~\ref{claim:recursion}
	each $x \in A_z$ is poor and unmatched with probability $p_{n/2} < C/n$
	independently of all other vertices in $A_z$.
	Therefore, since $|A_z| = 2^{n/2}$, by the Chernoff bound the probability that $A_z$ contains more than a $2C/n$ fraction of
	poor and unmatched vertices is at most $2^{-2^{\Omega(n)}}$.
	By taking union bound over all $z \in \cc[n/2]$ we conclude that
	with probability $1- 2^{-2^{\Omega(n)}}$ there exists a
	matching that matches all but $O(1/n)$ fraction of the proof vertices with a rich neighboring vertex.
	This completes the proof of Lemma~\ref{lemma:rand HLN}.
\end{proof}

\section{Open Problems}

Below we list several open problems.

\begin{question}
    In Theorem~\ref{thm:dict-xor} we constructed a linear mapping from
    $\Dict$ to $\Parity$ that is 3-local, 2-Lipschitz
    such that its inverse is $O(\log(n))$-Lipschitz.
    Is there a mapping from
    $\Dict$ to $\Parity$ that is $O(1)$-local, and $O(1)$-bi-Lipschitz?
    In particular, it would be interesting to find
    such a mapping that is non-linear.
\end{question}

\begin{question}
    We proved in Theorem~\ref{thm:rand}
    that for a random balanced function $f$
    with high probability there is a mapping $\phi_f$
    from $\Dict$ to $f$ such that
    $\avgstretch(\phi) = O(1)$ and $\avgstretch(\phi^{-1}) = O(1)$.
    Is it true that with high probability
    there is a bi-Lipschitz mapping from $\Dict$ to a random function,
    i.e., a mapping with bounded worse case stretch?
\end{question}

\begin{question}
    We proved in Theorem~\ref{thm:xor-maj}
    that any mapping from $\Parity$ to $\Maj$ must
    have average stretch larger than $\Omega(\sqrt{n})$.
    Is this bound tight?
    Is there a mapping $\phi$ from $\Parity$ to $\Maj$
    such that $\avgstretch(\phi) = o(n)$?
\end{question}

In this paper we only considered mappings between functions
with the same domain. If we allow one of the functions
to have a larger domain, we may relax the requirement
that a mapping between function must be a bijection,
and only require that the mapping be one-to-one.
Given Theorem~\ref{thm:xor-maj} we ask the following question.

\begin{question}
    Is there a Lipschitz embedding $\phi : \cc \to \cc[\poly(n)]$
    such that $\Parity(z) = \Maj(\phi(z))$ for all $z \in \cc$?
\end{question}

%

\bibliographystyle{alpha}
\bibliography{f-embed}

\end{document}